\DeclareMathOperator{\Ex}{Ex}
\DeclareMathOperator{\Var}{Var}
\DeclareMathOperator{\Exp}{Exp}
\newcommand{\N}{{\mathbb N}}
\newcommand{\R}{{\mathbb R}}
\newcommand{\Z}{{\mathbb Z}}
\newcommand{\HH}{{\cal H}}
\newcommand{\DTQW}[1]{X^{(D)}_{#1}}
\newcommand{\CTQW}[1]{X^{(C)}_{#1}}
\newcommand{\FTD}[1]{X^{(F)}_{#1}}
\newcommand{\ftd}{{\tilde t}}
\newcommand{\LA}[1]{Z^{(F)}_{#1}}
\newcommand{\IND}{{\chi}}
\newcommand{\ID}{{\bf 1}}
\newcommand{\df}{\delta }
\newcommand{\ket}[1]{|{#1}\rangle}
\newcommand{\kb}[1]{|{#1}\rangle\!\langle{#1}|}
\newcommand{\bra}[1]{\langle{#1}|}
\newcommand{\bkt}[2]{\langle{#1}|{#2}\rangle}
\newcommand{\wv}[2]{\langle{#1}\rangle_{#2}}
\newtheorem{defin}{Definition}[section]
\newtheorem{thm}{Theorem}[section]
\newtheorem{col}[thm]{Corollary}
\newtheorem{lem}[thm]{Lemma}
\begin{document}
\title{From Discrete Time Quantum Walk to Continuous Time Quantum Walk in Limit Distribution}
\author{Yutaka Shikano$^{1,2}$~\thanks{email: yshikano@ims.ac.jp} 
\\ $^1$Institute for Molecular Science, Okazaki, Aichi, Japan. \\
$^2$Department of Physics, Tokyo Institute of Technology, Tokyo, Japan.}
\date{\today}
\maketitle
\begin{abstract}
The discrete time quantum walk defined as a quantum-mechanical analogue of the discrete time random walk have recently been attracted from various and interdisciplinary fields. In this review, the weak limit theorem, that is, 
the asymptotic behavior, of the one-dimensional discrete time quantum walk is analytically shown. From the limit distribution of the discrete time quantum walk, the discrete time quantum walk can be taken as the quantum dynamical simulator of some physical systems.  
\end{abstract}
\tableofcontents
\section{Introduction} \label{intro}
	A discrete time quantum walk is defined as the quantum-mechanical analogue of the 
	discrete time random walk but is not the quantization of the random walk. Since this description is 
	stroboscopic and unitary, this can be taken as a simple quantum Turing machine. Therefore, there 
	have recently been several theoretical proposals on quantum information processing and 
	many experimental demonstrations in some physical systems. On the other hand, the discrete time 
	quantum walk also attracts mathematicians to better understand the stochastic process. The aim 
	of this review is to connect the mathematical treatments and physical implementations and meanings 
	of the discrete time quantum walk. 
	
	First of all, we give a rough explanation of the discrete time quantum walk. 
	As the random walker, which has information of the position, moves to the left or to the right site depending on the result 
	of a coin flip, a quantum walker and a quantum coin are defined as a position quantum state 
	and a two-level quantum state, which is called a qubit, respectively. It is noted that this 
	formulation is different from the quantization of the discrete time random walk. 
	One step operations of the discrete time random and quantum walks are summarized in Fig.~\ref{qwrwstep}. 
	Furthermore, a quantum coin flip and a shift should be unitary operations because these processes 
	are subject to quantum mechanics. It is emphasized that the coin flip is replaced by the one-qubit 
	operation and the shift operation keeps the quantum coherence between the position and 
	the coin summarized as a quantum circuit representation in Fig.~\ref{dtqwcircuit}. 
	This mathematical definition will be stated in Sec.~\ref{dtqw_sec}.
\begin{figure}[h]
	\begin{center}
	\includegraphics[width=11cm]{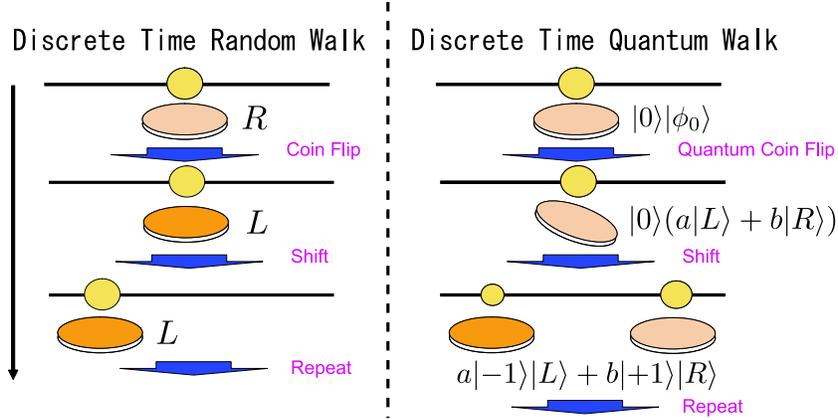}
	\caption{One step of the discrete time random walk v.s. quantum walk. The initial state 
	$\ket{0} \ket{\phi_0}$ expresses that the quantum walker has the quantum coin as $\ket{\phi_0}$ 
	at the origin $\ket{0}$. The coin space is spanned by the two orthogonal states denoted as 
	$\ket{L}$ and $\ket{R}$.}
	\label{qwrwstep}
	\end{center}
\end{figure}
\begin{figure}[h]
	\begin{center}
	\includegraphics[width=8cm]{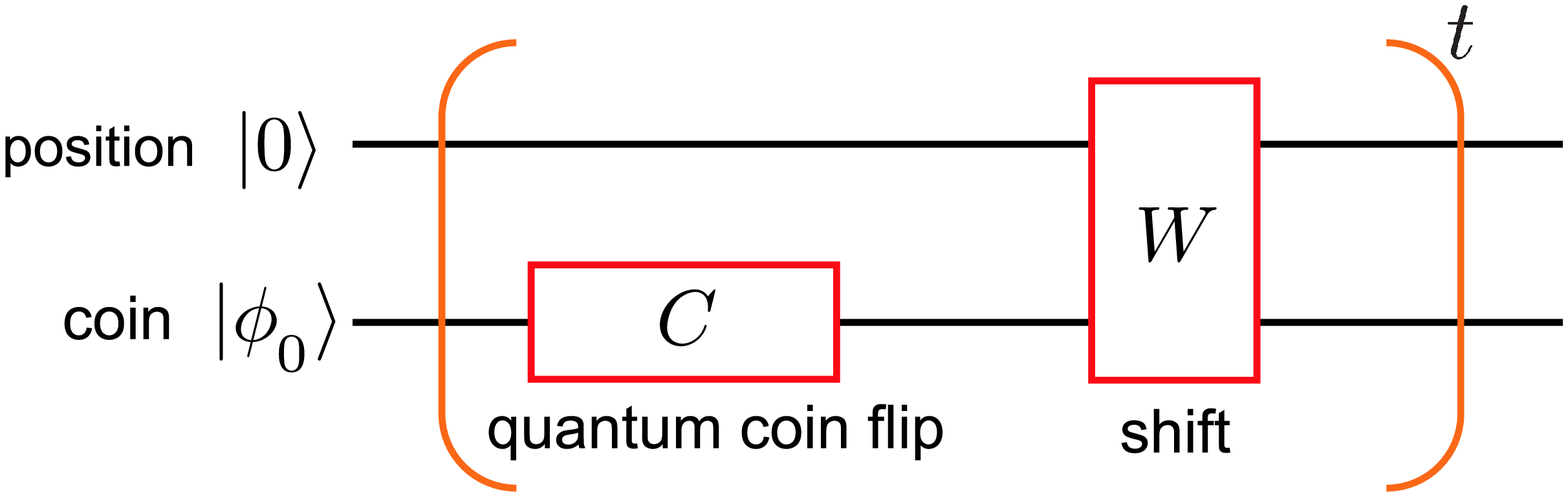}
	\caption{Quantum circuit representation of the discrete time quantum walk. The quantum 
	coin flip and the shift are denoted as $C$ and $W$, respectively. The probability 
	distribution at $t$th step is given by the position measurement after the partial 
	trace over the quantum coin.}
	\label{dtqwcircuit}
	\end{center}
\end{figure}

	As the mathematical motivation, for the asymptotic behaviors, how different between 
	the quantum and classical random walk has been not yet understood in the discrete and 
	continuous cases. While the discrete time quantum walk is experimentally realized 
	under the motivations to realize the primitive of quantum computation, see for more 
	details in Sec.~\ref{dtqw_sec}, we have not yet understood how robust the discrete 
	time quantum walk is under the influence of the noise. In this review, we analytically derive 
	the limit distribution of the several discrete time quantum walks in the one-dimensional system. 
	We show that the discrete time quantum walk can simulate various quantum dynamics.
\section{Review of Discrete Time Quantum Walk} \label{dtqw_sec}
Throughout this review, we focus on
a one-dimensional discrete time quantum walk (DTQW) with two-dimensional coins.
The DTQW is defined as a quantum-mechanical analogue of the classical random walk. 
The Hilbert space of the system is a tensor product 
$\HH_p \otimes \HH_c$, where $\HH_p$ is the position space of a quantum walker
spanned by the complete orthonormal basis $\{ \ket{n} \}$ ($n \in \mathbb{Z}$) and $\HH_c$ is the coin Hilbert space 
spanned by the two orthonormal states $\ket{L} = ( 1 , 0 )^{{\bf T}}$ and $\ket{R} = ( 0 , 1 )^{{\bf T}}$. 
A one-step dynamics is described by a unitary operator $U_t = WC_t$ with 
\begin{align}
	C_t & := \sum_n \left[ (a_{n,t} \ket{n,L} + c_{n,t} \ket{n,R})\bra{n,L} + (d_{n,t} 
	\ket{n,R} + b_{n,t} \ket{n,L}) \bra{n,R} \right], \label{coinop} \\
	& = \sum_n \left[ \kb{n} \otimes \left( \begin{array}{cc} a_{n,t} & b_{n,t} \\ c_{n,t} & d_{n,t} \end{array} \right) \right] \notag \\
	& = \sum_n \left[ \kb{n} \otimes \hat{C}_{n,t} \right] \notag \\
	W & := \sum_n \left( |n-1,L \rangle \langle n,L| + |n+1,R \rangle \langle n,R| \right), 
	\label{shift}
\end{align} 
where $\ket{n, \xi} =: \ket{n} \otimes \ket{\xi} \in \HH_p \otimes \HH_c \ (\xi = L, R)$
and the coefficients at each position satisfy the following relations: 
 $|a_{n,t}|^2 + |c_{n,t}|^2 = 1$, $a_{n,t} \overline{b}_{n,t} + c_{n,t} \overline{d}_{n,t} = 0$, $c_{n,t} = 
- \Delta_{n,t} \overline{b}_{n,t}$, 
$d_{n,t} = \Delta_{n,t} \overline{a}_{n,t}$, where $\Delta_{n,t} = a_{n,t} d_{n,t} - b_{n,t} c_{n,t}$ with $ |\Delta_{n,t}|= 1$ for any $t$.  
Two operators $C_t$ and $W$ are called coin and shift operators, respectively. 
The probability distribution at the position $n$ at the $t$th step is then defined by 
\begin{equation}
	\Pr [n;t] = \sum_{\xi \in \{ L,R \} } \left| \bra{n,\xi} \prod_t U_t \ket{0,\phi_0} \right|^2.
\end{equation}
Mathematically speaking, the position of the DTQW at $t$th step is a random variable denoted as $X_t$.
It is remarked that a time-independent and homogeneous version of this DTQW was first introduced in Ref.~\cite{Ambainis}. 
Throughout this review, we only consider the cases of 
the time-independent and homogeneous coin, the time-dependent and homogeneous coin, the 
time-independent and inhomogeneous coin.
Therefore, the quantum coin is often described as $\hat{C}_{n,t} = C$ or $\hat{C}_{n,t} = \hat{C}_n$ 
in this review. As a simple example, we show the dynamics of the DTQW with the Hadamard coin given by 
\begin{equation} \label{hcoin}
	\hat{C}_{n,t} = \frac{1}{\sqrt{2}} \left( \begin{array}{cc} 1 & 1 \\ 1 & -1 \end{array} \right) 
\end{equation}
by the third step in the case of the initial coin state $\ket{\phi_0} = \ket{L}$ illustrated 
in Fig.~\ref{dqo_fig} and at $1000$th step in the case of the initial state 
$\ket{\phi_0} = (\ket{L} + i \ket{R})/\sqrt{2}$ illustrated in Fig.~\ref{2000thstep}.
\begin{figure}[ht]
	\centering
	\includegraphics[width=10cm]{./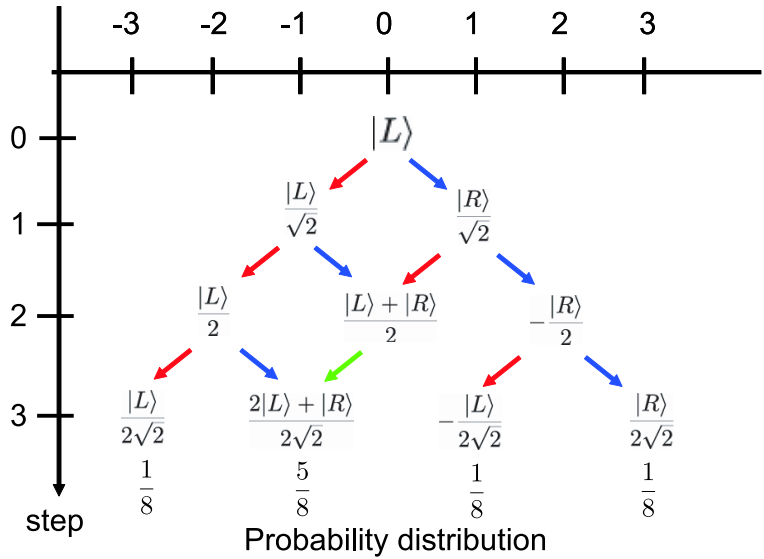}
	\caption{Dynamics of the DTQW with the Hadamard coin (\ref{hcoin}) by the third step. 
	The coin state is expressed at each position. The initial coin state is $\ket{L}$ at the 
	origin. At the third step, the quantum interference occurred at the origin to 
	be indicated at the green arrow. Therefore, 
	the DTQW does not have the spatial symmetry like the classical random walk. This is dependent of 
	the coin operator and the initial coin state.}
	\label{dqo_fig}
\end{figure}
\begin{figure}[ht]
	\centering
	\includegraphics[width=13cm]{./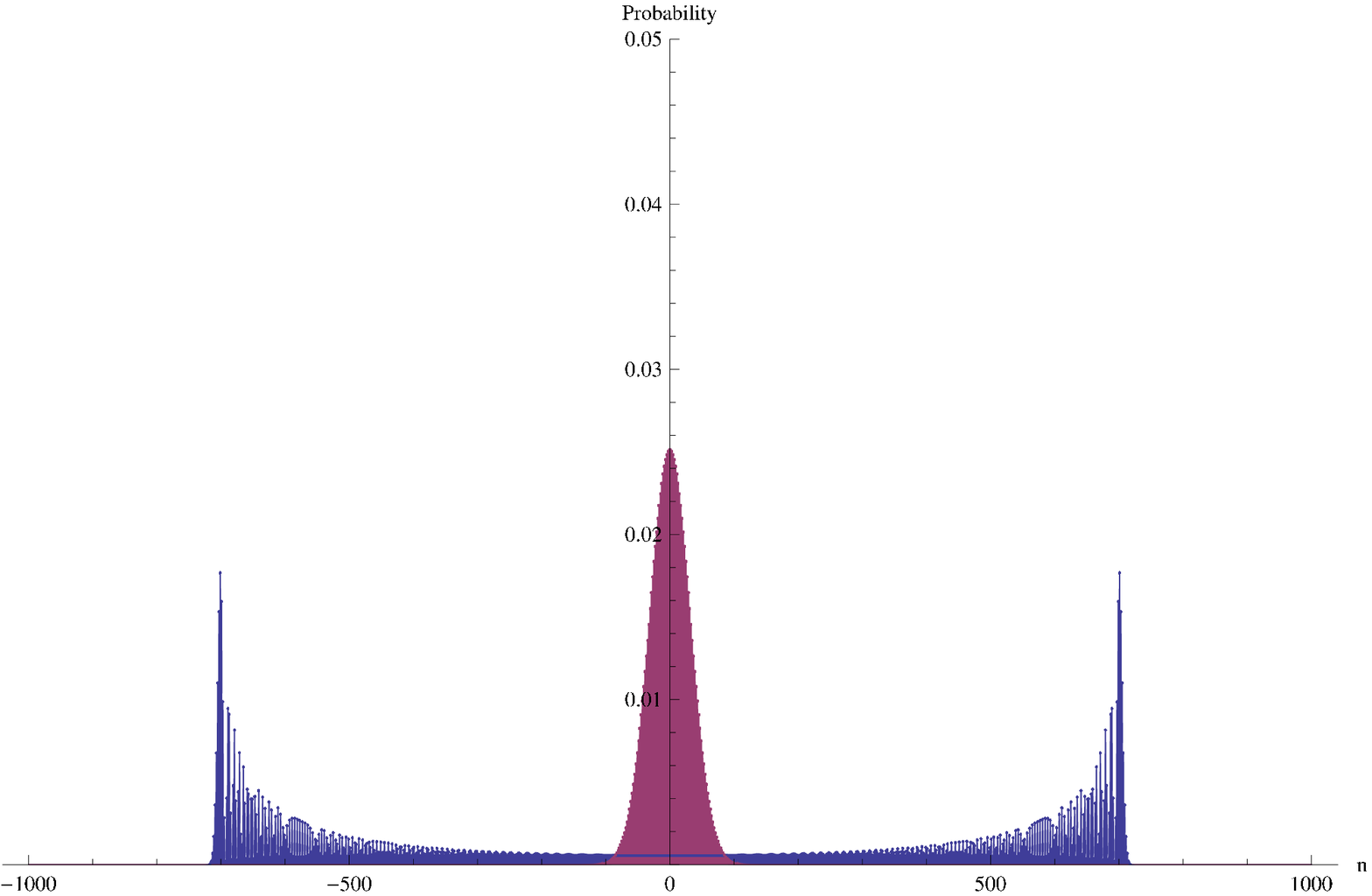}
	\caption{Probability distribution of the DTQW with the Hadamard coin (\ref{hcoin}) 
	at the $1000$th step with the initial coin state $(\ket{L} + i \ket{R})/\sqrt{2}$.}
	\label{2000thstep}
\end{figure}

Historically speaking, the concept of the DTQW was introduced in the independent 
seminal works in mathematics, physics, and 
information science. As far as the author knows, Gudder first introduced its concept 
in his book~\cite{Gudder}. 
While his motivation was to define the stochastic process in quantum probability, 
he did not mathematically clearly 
defined the concept in his book. Note that, curiously, he introduced the three step dynamics of the DTQW 
as the candidate of the stochastic 
process in quantum probability. However, this is an interesting open question. 
Second, Aharonov {\it et al.} 
introduced a slightly different version of the DTQW as {\it quantum random walk}~\cite{AharonovQW}. 
Their quantum walk is defined as follows. 
While one-step dynamics is identical to one of the DTQW, in their introduced quantum walk, 
for every step, the coin state has to be replaced by the initial one. This process is 
a non-unitary process. While they introduced this as 
the quantum-mechanical analogue of the classical random walk, their ultimate aim 
is to construct a process to extract the weak value. 
Since for every step, the same coin is used, we can extract the weak value of 
the spin component, for example, $\wv{\sigma_z}{w}$ as the difference of the shift of the 
position of a quantum particle. This situation mimics the 
Stern-Gerlach gedankenexperiment. Finally, from the viewpoint of information science, 
Meyer tried to construct a quantum-mechanical analogue of the cellular automaton to 
describe dynamics of the one-dimensional quantum gas~\cite{Meyer}. 
He showed no-go lemma that keeping the unitarity, a quantum-mechanical analogue of the random walk cannot 
be described by a single-component complex-valued function; physically speaking, this is the wavefunction 
without the spin component. 

Since the DTQW is defined as the quantum-mechanical analogue of the classical random walk, the DTQW 
has many applications to some systems like the classical random walk as the theoretical studies. 
For examples, in quantum information science, the DTQW is used as the quantum-speedup 
algorithms~\cite{Ambainis2,Buhrman,Magniez,Magniez2}. The Grover algorithm, which is the 
spatially search algorithm, can be taken as the 
tool of the DTQW~\cite{Shenvi}. Furthermore, the Lieb-Robinson bound, which is the upper bound of 
the quantum-speedup algorithm with the energy gap, is related to the DTQW in Ref.~\cite{Chandrashekar10}. 
The DTQW on a graph is a primitive of universal quantum 
computation~\cite{Lovett}~\footnote{In the case of the continuous 
time quantum walk, this is also satisfied~\cite{Childs2}.}. See more details in the book~\cite{VA} and the paper~\cite{VA2}.
Furthermore, the entanglement of the DTQW has been analyzed in 
Refs.~\cite{Chandrashekar06, Maloyer07, Abal07, Liu11} for the coin and space state of a single particle and 
in Refs.~\cite{Rohde11, Stefanak11} for multi particles. Also, 
the DTQW is used as the quantum simulator on the quantum phase transition from 
the Mott insulator to the super-fluid phase, which means the long-range coherence~\cite{Chandrashekar08}, 
the Landau-Zehner transition~\cite{Oka05}, and the classification of 
the topological phase~\cite{Kitagawa10}. Recently, there are the theoretical progresses on the multi-particle 
DTQW for the non-interacting case~\cite{Mayer11, GASM} and the 
interacting one~\cite{Ahlbrecht11, Lahini11}. On the physical properties on the DTQW, see 
the review papers~\cite{Kempe03, Kendon07, Kitagawa}.
As the mathematical aspects, there are many analytical 
studies on the asymptotic behaviors on the DTQW to see more details in the book~\cite{KonnoRev}.
Almost all classes of the DTQW except for a one-dimensional one with a two-dimensional 
coin~\cite{Konno02, Konno05, Grimmett} have the aspect of 
the localization defined that the limit distribution of the DTQW 
divided by some power of the time variable 
has the probability density given by the Dirac delta function. 
In the one dimensional DTQW, the localization was shown in
the models with a three-dimensional coin~\cite{Inui}, a four-dimensional coin~\cite{Inui2}, 
a two-dimensional coin with memory~\cite{McGettrick}, 
a two-dimensional coin in a random environment~\cite{Joye}, two-dimensional coins~\cite{Liu}, 
an spatially incommensurate coin~\cite{SK} and a time-dependent two-dimensional coin 
on the Fibonacci quantum walk~\cite{Romanelli} and 
a random coin~\cite{Joye2, Obuse}. On the other hand, the nature of the localization in 
the two-dimensional DTQW has been studied numerically~\cite{Mackay, Tregenna} and 
analytically~\cite{Inui3, Watabe}. The DTQW on the Cayley tree with a multi-state 
coin was also studied~\cite{Chisaki, Chisaki2}. Furthermore, the recurrence properties of the DTQW, 
i.e., the decay rate of the probability at the origin, were studied 
in Refs. \cite{Wojcik,Stefanak,Stefanak2,Stefanak3,Chandrashekar2,Xu}. 
Recently, the localization property of the DTQW with the inhomogeneous two-dimensional coin has been 
analytically shown in the model in which the coin at the origin is 
different from the rest~\cite{Konno1,Konno2} and in the model of 
the periodically inhomogeneous coins~\cite{Linden}.

It is curiously pointed out that the DTQW has not been found in natural phenomena while the classical 
random walk was used as the model of the geometrical structure of the complex polymer, 
"for more details see the book~\cite{random_book}. However, the DTQW has been 
experimentally realized. In the system of the molecule with the nuclear magnetic resonance, 
there is the experimental realization by third step~\cite{Ryan}. 
In the system of the ion trap, there are theoretical proposals 
and their experimental realization with $\,^{40}$Ca$^{+}$ ion~\cite{Roos} to 
be motivated by the quantum simulator of the Dirac equation~\cite{Roos2}. 
In the system of the optical lattice, there are theoretical proposals 
and their experimental realizations~\cite{Eckert, Karski}. Furthermore, there are experimental 
demonstration of the DTQW by three step~\cite{Schreiber}, of the DTQW with decoherence~\cite{Broome}, 
and on the topological phase~\cite{Kitagawa11} in the optical system~\footnote{Here, we listed up 
the experimental realizations of the DTQW. Therefore, Ref.~\cite{Peruzzo} to experimentally 
demonstrate the two-particle continuous time quantum walk is omitted.}.

\section{Limit Distribution of Discrete Time Quantum Walk}
In the following, we focus on the asymptotic behavior of the DTQW. This is because the asymptotic 
behavior is easily computed and is the most fundamental property of the stochastic process like taking 
the thermodynamic limit in statistical mechanics. Studying the asymptotic behaviors of the stochastic 
process is similar to study the connection between the microscopic and macroscopic descriptions. 
From the viewpoint of probability theory, we obtain the weak limit theorem {\it a la} 
the central limit theorem of the classical random walk (\ref{central}). 
\begin{thm}[Konno~\cite{Konno02,Konno05}] \label{konno_proof}
Let $\DTQW{t}$ be the random variable of the homogeneous and time-independent DTQW at time $t$ with the coin operator 
\begin{equation}
	C = \left( \begin{array}{cc} a & b \\ c & d \end{array} \right)
\end{equation} 
and the initial coin state 
\begin{equation}
	\ket{\phi_0} = q_L \ket{L} + q_R \ket{R}
\end{equation}
with $q_L, q_R \in \mathbb{C}$. 
Therefore, we obtain 
\begin{equation} \label{dtqwlimit}
	\frac{\DTQW{t}}{t} \Rightarrow K (|a|), 
\end{equation}
which $\Rightarrow$ means the weak convergence~\footnote{The moment-generating function of the 
random variable under the specific time scale can be converged.}.
Here, $K(r)$ has the probability density function $f (x ; r)$ with a parameter $r \in (0,1)$: 
\begin{equation}
	f (x ; r) = \frac{\sqrt{1-r^2} \, \IND_{(-r,r)}(x)}{\pi(1-x^2)\sqrt{r^2-x^2}} 
	\left\{ 1 - \left( |q_L|^2 - |q_R|^2 + 
	\frac{a q_L \overline{b q_R} + \overline{a q_L} b q_R}{|a|^2} \right) x \right\},
\end{equation} 
where the indicator function $\IND_{(-r,r)}(x)$ is defined as 
	\begin{equation} \label{indf}
		\IND_A (x) = \begin{cases} 1 \ (x \in A) \\ 0 \ (x \notin A) \end{cases}.
	\end{equation}
\end{thm}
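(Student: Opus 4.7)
The plan is to Fourier-transform the position label and reduce the dynamics to a one-parameter family of $2 \times 2$ matrices. Setting $\hat{\Psi}_t(k) = \sum_{n \in \Z} e^{-ikn}\Psi_t(n)$ with $\Psi_t(n) \in \HH_c$ the two-component amplitude of the walker at site $n$, the translation invariance of $W$ renders the shift diagonal with symbol $\mathrm{diag}(e^{ik}, e^{-ik})$. The one-step operator becomes $\hat{U}(k) = \mathrm{diag}(e^{ik},e^{-ik})\,C$, and the localized initial condition $\Psi_0(n) = \delta_{n,0}\ket{\phi_0}$ has $k$-independent Fourier transform $\hat{\Psi}_0(k) = \ket{\phi_0}$, so $\hat{\Psi}_t(k) = \hat{U}(k)^t \ket{\phi_0}$. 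This replaces the infinite-dimensional recursion by spectral analysis of a single matrix depending on $k$.

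Next, I would diagonalize $\hat{U}(k)$. Its characteristic polynomial $\lambda^2 - (a e^{ik}+d e^{-ik})\lambda + \Delta = 0$ with $|\Delta|=1$ yields eigenvalues $e^{i\theta_\pm(k)}$ and an orthonormal eigenbasis $\ket{v_\pm(k)}$. A direct computation shows that the group velocities $h_\pm(k) := \theta_\pm'(k)$ sweep monotonically across $(-|a|, |a|)$, which will fix the support of the limit density. By Plancherel,
\begin{equation}
E\bigl[e^{i\xi \DTQW{t}/t}\bigr] = \int_{-\pi}^{\pi}\frac{dk}{2\pi}\, \bigl\langle \hat{U}(k)^t\ket{\phi_0},\ \hat{U}(k-\xi/t)^t\ket{\phi_0}\bigr\rangle,
\end{equation}
and expanding $\ket{\phi_0}$ in the eigenbasis produces a sum of four terms indexed by $(j,l) \in \{+,-\}^2$ carrying oscillatory factors $e^{it[\theta_l(k-\xi/t) - \theta_j(k)]}$.

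The asymptotic analysis splits into two cases. For $j \neq l$ the phase derivative in $k$ stays bounded away from zero, so by the Riemann-Lebesgue lemma the corresponding integrals vanish as $t \to \infty$. For $j = l$ the phase converges to $-\xi h_j(k)$ while the eigenvector overlaps tend to $|\bkt{v_j(k)}{\phi_0}|^2$, so dominated convergence delivers
\begin{equation}
\lim_{t\to\infty} E\bigl[e^{i\xi \DTQW{t}/t}\bigr] = \sum_{j=\pm}\int_{-\pi}^{\pi}\frac{dk}{2\pi}\, \bigl|\bkt{v_j(k)}{\phi_0}\bigr|^2\, e^{-i\xi h_j(k)}.
\end{equation}
A change of variables $x = -h_j(k)$ on each monotone branch produces the Jacobian $\sqrt{1-|a|^2}\,\IND_{(-|a|,|a|)}(x)/[\pi(1-x^2)\sqrt{|a|^2 - x^2}]$, and summing the two branches reconstructs the Konno density $f(x; |a|)$.

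I expect the main obstacle to be the final algebraic simplification: combining $|\bkt{v_+(k(x))}{\phi_0}|^2 + |\bkt{v_-(k(x))}{\phi_0}|^2$ so that the $q_L, q_R$ cross-terms condense exactly into $(a q_L\,\overline{b q_R} + \overline{a q_L}\,b q_R)/|a|^2$. This step uses the unitarity relations $c = -\Delta\overline{b}$ and $d = \Delta\overline{a}$ to eliminate the phase of $\Delta$ and requires carefully aligning the orientations of the two eigenbranches on their common support $(-|a|,|a|)$; it is mechanical but deceptively delicate.
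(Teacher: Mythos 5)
Your proposal is correct and follows essentially the same route as the paper: the spatial Fourier transform reducing the walk to $\hat{U}(k)=\mathrm{diag}(e^{ik},e^{-ik})C$, spectral decomposition into two branches, asymptotics of the characteristic function dominated by the group velocities $\theta_\pm'(k)$, and a change of variables onto $(-|a|,|a|)$ yielding the Konno density. The paper reaches the limiting characteristic function by first computing the $j$-th moments and assembling the moment-generating function rather than via the shifted-argument identity $\hat{U}(k-\xi/t)$, but this is the same Grimmett--Janson--Scudo machinery, not a different argument.
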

While Konno used the combinatorial method in this proof~\cite{Konno02,Konno05}, 
in the following, we prove this by the spatial Fourier transformation~\cite{Grimmett}.
\begin{proof}
	Taking the Fourier transform for the state $\ket{n, \xi}$ with the position $n \in \Z$ and the 
	coin state $\ket{\xi} \in \HH_c$, we have 
	\begin{equation}
		\ket{k, \hat{\xi}} = \sum_{n \in \Z} e^{-i k n} \ket{n, \xi}.
	\end{equation}
	Since the time evolution in the phase space is given by 
	\begin{align}
		\ket{\hat{\Psi} (k)_t} & := \sum_{n \in \Z} e^{-i k n} U^{t} \ket{0, \phi_0} \notag \\
		& = e^{ik} \begin{bmatrix} a & b \\ 0 & 0 \end{bmatrix} \sum_{n \in \Z} e^{- i k (n + 1)} 
		\bra{n+1} U^{t-1} \ket{0, \phi_0} \notag \\ & \ \ \ + e^{-ik} \begin{bmatrix} 0 & 0 \\ c & d \end{bmatrix} 
		\sum_{n \in \Z} e^{- i k (n - 1)} \bra{n-1} U^{t-1} \ket{0, \phi_0} \notag \\
		& = \begin{bmatrix} e^{ik} & 0 \\ 0 & e^{-ik} \end{bmatrix} U \ket{\hat{\Psi} (k)_{t-1}}
	\end{align}
	and the initial state in the phase space is given by 
	\begin{equation}
		\ket{\hat{\Psi} (k)_0} = q_L \ket{k, \hat{L}} + q_R \ket{k, \hat{R}},
	\end{equation}
	we obtain 
	\begin{equation}
		\ket{\hat{\Psi} (k)_t} = U(k)^t \ket{\hat{\Psi} (k)_0},
	\end{equation}
	where 
	\begin{equation} \label{hen}
		U(k) := \begin{bmatrix} e^{ik} & 0 \\ 0 & e^{-ik} \end{bmatrix} U.
	\end{equation}
	Here, we obtain the probability distribution in the real space, 
	\begin{equation}
		\Pr [n; t] = \int^{\pi}_{- \pi} \frac{d k^{\prime}}{2 \pi} \int^{\pi}_{- \pi} 
		\frac{d k}{2 \pi} e^{i (k - k^{\prime})n} \bra{\hat{\Psi} (k^{\prime})_0} U^{\dagger}
		(k^{\prime})^{t} U(k)^t \ket{\hat{\Psi} (k)_0}.
	\end{equation}
	Therefore, the $j$-th moment for the random variable $X_t$ is given by 
	\begin{align}
		\Ex [(X_t)^j] &:= \sum_{n \in \Z} n^j \Pr [X_t = n] \notag \\ 
		&= \sum_{n \in \Z} \int^{\pi}_{- \pi} \frac{d k^{\prime}}{2 \pi} 
		e^{- i k^{\prime} n} \bra{\hat{\Psi} (k^{\prime})_t} \int^{\pi}_{- \pi} 
		\frac{d k}{2 \pi} \left\{ \left( - i \frac{d}{dk} \right)^j e^{i k^{\prime} n} \right\} 
		\ket{\hat{\Psi} (k)_t} \notag \\
		&= \int^{\pi}_{-\pi} \frac{dk}{2 \pi} \bra{\hat{\Psi} (k)_t} \left( i \frac{d}{dk} \right)^j 
		\ket{\hat{\Psi} (k)_t}. \label{expe}
	\end{align}
	Here, in the last line, we take the partial integration and use the periodic function 
	for $\ket{\hat{\Psi} (k)_t}$.
	
	Let the eigenvalue and its eigenvector of $U(k)$ be denoted as $\lambda_\xi (k)$ 
	and $\ket{v_\xi (k)}$ with $\xi \in \{ 1, 2 \}$, respectively. We have 
	\begin{equation}
		\left( i \frac{d}{dk} \right)^j \ket{\hat{\Psi} (k)_t} = 
		\sum_{\xi \in \{ 1, 2 \}} t (t-1) \cdots (t - j + 1) \lambda_\xi (k)^{t-j} 
		\bkt{v_\xi (k)}{\hat{\Psi} (k)_0} \ket{v_\xi (k)} + O (t^{j-1}).
	\end{equation}
	Therefore, Eq.~(\ref{expe}) is expressed as 
	\begin{align}
		\Ex [(X_t)^j] & = \int^{\pi}_{-\pi} \sum_{\xi \in \{ 1, 2 \}} t (t-1) \cdots (t - j + 1) 
		\lambda_\xi (k)^{-j}
		\left( i \frac{d}{dk} \lambda_\xi (k) \right)^j \times \notag \\ 
		& \ \ \ \ \ \ \ \ \ \ \ \ \ \ \ \ \ \ \ \ \ \ \ \ \ \ \ \ \ \ \ \ \ \ \ \ \ \ \ \ \ \ \ \ \ \ 
		\bkt{v_\xi (k)}{\hat{\Psi} (k)_0} 
		\bkt{\hat{\Psi} (k)_0}{v_\xi (k)} + O (t^{-1}) \notag \\
		& = \int^{\pi}_{-\pi} \sum_{\xi \in \{ 1, 2 \}} t (t-1) \cdots (t - j + 1) \left( 
		\frac{i \frac{d}{dk} \lambda_\xi (k)}{\lambda_\xi (k)} \right)^j | 
		\bkt{v_\xi (k)}{\hat{\Psi} (k)_0}|^2 \frac{dk}{2 \pi} + O (t^{-1}).
	\end{align}
	Therefore, we obtain, as $t \to \infty$,
	\begin{equation} \label{prode}
		\Ex \left[ \left( \frac{X_t}{t} \right)^j \right] \to \int^{\pi}_{-\pi} 
		\sum_{\xi \in \{ 1, 2 \}}\left( 
		\frac{i \frac{d}{dk} \lambda_\xi (k)}{\lambda_\xi (k)} \right)^j | 
		\bkt{v_\xi (k)}{\hat{\Psi} (k)_0}|^2 \frac{dk}{2 \pi} 
	\end{equation}
	to obtain the momentum-generating function as 
	\begin{equation}
		\Ex \left[ e^{( i \tau X_t) / t}  \right] \to \sum_{\xi \in \{ 1, 2 \}} \int^{\pi}_{- \pi} \exp \left( i 
		\tau \frac{i \frac{d}{dk} \lambda_\xi (k)}{\lambda_\xi (k)} \right) 
		| \bkt{v_\xi (k)}{\hat{\Psi} (k)_0}|^2 \frac{dk}{2 \pi} \ {\rm as} \ t \to \infty \label{dtqwmomentum}
	\end{equation}
	for any real parameter $\tau \in \R$ since $\exp (i \tau f(k) )$ is regular when $f(k)$ is a continuous function.
	We calculate the eigenvalue of Eq.~(\ref{hen}) to obtain the term of 
	$\frac{i \frac{d}{dk} \lambda_\xi (k)}{\lambda_\xi (k)}$ as 
	\begin{equation}
		\frac{i \frac{d}{dk} \lambda_\xi (k)}{\lambda_\xi (k)} = \pm \frac{|a| \cos \left( k - \frac{\theta}{2} \right)}{
		\sqrt{|a|^2 \cos^2 \left( k - \frac{\theta}{2} \right) - (1 - |a|^2) e^{i 2 k}}} =: \pm x,
	\end{equation}
	where $e^{i \theta} := ad - bc$ due to the unitary condition $|ad - bc| = 1$.
	From Eq. (\ref{dtqwmomentum}) and $| \bkt{v_1 (k)}{\hat{\Psi} (k)_0}|^2 + | \bkt{v_2 (k)}{\hat{\Psi} (k)_0}|^2 = 1$, 
	we obtain 
	\begin{equation}
	\Pr [X_t \leq x] = \int^{k(x)}_{-k(x)} | \bkt{v_1 (k)}{\hat{\Psi} (k)_0}|^2 \frac{dk}{2 \pi} 
	+ \left( \int^{-\pi + k(x)}_{- \pi} + \int^{\pi}_{\pi - k(x)} \right) | \bkt{v_2 (k)}{\hat{\Psi} (k)_0}|^2 \frac{dk}{2 \pi}
	\end{equation}
	Therefore, we obtain the probability density as 
	\begin{align}
		f(x) & = \frac{d}{dx} \Pr [X_t \leq x] \notag \\
		& = \frac{1}{2 \pi} \left( | \bkt{v_1 (k)}{\hat{\Psi} (k)_0}|^2 + 
		| \bkt{v_1 (-k)}{\hat{\Psi} (-k)_0}|^2 \right. \notag \\ & \left. \ \ \ \ \ \ \ \ \ \ 
		+ | \bkt{v_2 (- \pi + k)}{\hat{\Psi} (- \pi + k)_0}|^2 + 
		| \bkt{v_2 (\pi - k)}{\hat{\Psi} (\pi - k)_0}|^2 \right) \frac{dk(x)}{dx} \notag \\
		& = \frac{\sqrt{1-|a|^2} \, \IND_{(-|a|,|a|)}(x)}{\pi(1-x^2)\sqrt{|a|^2-x^2}} 
		\left\{ 1 - \left( |q_L|^2 - |q_R|^2 + 
		\frac{a q_L \overline{b q_R} + \overline{a q_L} b q_R}{|a|^2} \right) x \right\}.
	\end{align}
	This is the desired result.
\end{proof}
The above theorem tells us that the DTQW has the capacities of the quantum speedup since the scaling of the 
random variable of the DTQW is $t$ while one of the classical case is $\sqrt{t}$. This effect originates from the 
superposition of the quantum walker. Also, the effect of the quantum interference causes a inverted bell shape 
distribution, which is completely different from the normal distribution.
\section{Connection to Dirac Equation} \label{demap_sec}
In the following, we derive the free Dirac equation from the DTQW with the specified coin operator;
\begin{equation}
	C (\epsilon) = \left( \begin{array}{cc} \cos \epsilon & - i \sin \epsilon \\ - i \sin \epsilon & \cos \epsilon \end{array} \right)
\end{equation}
with $\epsilon \in \mathbb{R}$. Let $\Psi_n (x) \in \HH_p \otimes \HH_c$ be a quantum state at a position $x \in \mathbb{R}$ with the 
associated coin state $\ket{\psi_x}$ at $n$ step. When we take the tiny parameter $\epsilon$ as one lattice size, 
we obtain the dynamics of the DTQW as 
\begin{align}
\Psi_n (x) & = Q (\epsilon) \Psi_{n-1} (x - \epsilon) + P (\epsilon) \Psi_{n-1} (x + \epsilon) \notag \\ 
& = Q(\epsilon) \left(1 - \epsilon \frac{\partial}{\partial x} + O(\epsilon^2) \right) \Psi_{n-1} (x) 
+ P(\epsilon) \left(1 + \epsilon \frac{\partial}{\partial x} + O(\epsilon^2) \right) \Psi_{n-1} (x)
\end{align}
where
\begin{align}
Q (\epsilon) & = \left( \begin{array}{cc} 0 & 0 \\ 0 & 1 \end{array} \right) - i \epsilon 
\left( \begin{array}{cc} 0 & 0 \\ 1 & 0 \end{array} \right) + O(\epsilon^2) \\ 
P (\epsilon) & = \left( \begin{array}{cc} 1 & 0 \\ 0 & 0 \end{array} \right) - i \epsilon \left( \begin{array}{cc} 0 & 1 \\ 0 & 0 
\end{array} \right) + O(\epsilon^2).
\end{align} 
We obtain the dynamics of the DTQW from the beginning~\cite{Strauch, Bracken}, 
\begin{equation}
\Psi_\tau (x) \sim e^{- i \left( \sigma_x + \sigma_z \frac{\partial}{\partial x} \right) t} \Psi_{0} (x), 
\end{equation}
where $t = \epsilon \tau$. Partially differentiating this with respect to $t$, we obtain
\begin{align}
	\frac{\partial \Psi_t (x)}{\partial t} & \sim - i \left( \sigma_x + \sigma_z \frac{\partial}{\partial x} \right) 
	e^{- i \left(\sigma_x + \sigma_z \frac{\partial}{\partial x} \right) t} \Psi_{0} (x) \notag \\
	& = - i \left( \sigma_x + \sigma_z \frac{\partial}{\partial x} \right) \Psi_t (x).
\end{align}
This equation corresponds to the Dirac equation in $1+1$ dimensions by taking the coin state as the 
spinor~\footnote{This analysis can be extended to the two-dimensional space~\cite{Sato}.}. It is emphasized that 
the DTQW is described in non-relativistic quantum mechanics and the Dirac equation is the fundamental equation 
of relativistic quantum mechanics. This extension originates from the discretization of non-relativistic quantum mechanics. 
As a naive extension, some kinds of differential equations may be simulated from the DTQW model. 
\section{Connection to Continuous Time Quantum Walk} \label{ctqw_sec}
In this section, we will see the relationship between the continuous time quantum walk (CTQW), 
which will be defined in the following subsection, and the DTQW according to Ref.~\cite{CKSS2}.
\subsection{Review of continuous time quantum walk}
Let us define the CTQW~\cite{Farhi}, which was originally motivated in the context of the speedup algorithm 
of quantum computation using the graph, as follows.
The state space is defined as the position state space $\HH_p$ only in contrast with the DTQW. 
Let $\Psi_t^{(C)}(x)$ be a quantum state at a position $x \in \mathbb{R}$. 
The time evolution is given by the discretized Schr\"{o}dinger equation: 
\begin{align}\label{defConti}
	-i\frac{\partial \Psi^{(C)}_t(x)}{\partial t} & = 
	\frac{1}{2} \left(\gamma \Psi^{(C)}_t(x-1) + \overline{\gamma} \Psi^{(C)}_t(x+1) \right) 
	\ \ {\rm where} \ t>0, \notag \\
	\Psi_0^{(C)}(x) &= \df(x), 
\end{align}
where $\gamma$ is a complex number. 
Let $\CTQW{t}$ be the random variable of the CTQW at time $t$. The distribution of $\CTQW{t}$ is given 
by $\Pr (\CTQW{t}=x) = | \Psi^{(C)}_{t}(x) |^2$. 
This can be taken as dynamics of the discretized Schr\"{o}dinger equation. Furthermore, from the viewpoint of 
statistical physics, this can be taken as the hopping dynamics of the Hubbard model noting that the CTQW 
describes the single-particle behavior. 

In analogy to the DTQW (\ref{dtqwlimit}), we obtain the weak limit theorem as 
\begin{thm}[Konno~\cite{KonnoQLC}] \label{Clim}
Let $\CTQW{t}$ be the CTQW at time $t$. 
\begin{equation}
	\frac{\CTQW{t}}{t} \Rightarrow Z(\gamma).
\end{equation}
Here, $Z(a)$ has the probability density;
\begin{equation}
	f(x; a) = \frac{\IND_{(-|a|,|a|)}(x)}{\pi \sqrt{|a|^2-x^2}}. \label{arcsine}
\end{equation} 
\end{thm}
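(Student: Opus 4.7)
The plan is to mirror the Fourier-analytic strategy used in the proof of Theorem~\ref{konno_proof}, exploiting the fact that the CTQW is even simpler because the dynamics is scalar rather than $2\times 2$. First I would take the spatial Fourier transform $\hat{\Psi}_t(k) := \sum_{x\in\Z} e^{-ikx}\Psi^{(C)}_t(x)$. The two shift terms on the right-hand side of \Ref{defConti} become multiplication by $e^{-ik}$ and $e^{ik}$, so the PDE collapses to the ODE
\begin{equation}
-i\,\partial_t \hat{\Psi}_t(k) \;=\; \tfrac{1}{2}\bigl(\gamma e^{-ik} + \overline{\gamma}e^{ik}\bigr)\hat{\Psi}_t(k) \;=\; |\gamma|\cos(k-\alpha)\,\hat{\Psi}_t(k),
\end{equation}
where $\alpha := \arg\gamma$. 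Using the initial condition $\hat{\Psi}_0(k)\equiv 1$ (since $\Psi^{(C)}_0(x)=\df(x)$), this integrates immediately to $\hat{\Psi}_t(k) = \exp\bigl(it\,|\gamma|\cos(k-\alpha)\bigr)$.

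Next I would compute the characteristic function of the rescaled variable. By Parseval's identity, for any real $\tau$,
\begin{equation}
\Ex\!\left[e^{i\tau \CTQW{t}/t}\right] \;=\; \sum_{x\in\Z} e^{i\tau x/t}\,|\Psi^{(C)}_t(x)|^2 \;=\; \int_{-\pi}^{\pi} \overline{\hat{\Psi}_t(k)}\,\hat{\Psi}_t(k-\tau/t)\,\frac{dk}{2\pi}.
\end{equation}
Substituting the explicit form of $\hat{\Psi}_t$ and expanding $\cos(k - \tau/t - \alpha) = \cos(k-\alpha) + (\tau/t)\sin(k-\alpha) + O(t^{-2})$, the $t$-independent piece cancels the conjugate, leaving
\begin{equation}
\Ex\!\left[e^{i\tau \CTQW{t}/t}\right] \;=\; \int_{-\pi}^{\pi} \exp\!\bigl(i\tau|\gamma|\sin(k-\alpha) + O(t^{-1})\bigr)\,\frac{dk}{2\pi} \;\xrightarrow[t\to\infty]{}\; \int_{-\pi}^{\pi} e^{i\tau|\gamma|\sin(k-\alpha)}\,\frac{dk}{2\pi},
\end{equation}
the convergence being uniform on compacts in $\tau$ since the error is bounded by $C|\tau|^2/t$.

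Finally I would identify the limit as the characteristic function of the arcsine law on $(-|\gamma|,|\gamma|)$. Shifting by $\alpha$ via periodicity and then splitting the interval $[-\pi,\pi]$ into the two monotone branches of $\sin$, the substitution $x=|\gamma|\sin u$ (with $du = dx/\sqrt{|\gamma|^2-x^2}$ on each branch) yields
\begin{equation}
\int_{-\pi}^{\pi} e^{i\tau|\gamma|\sin u}\,\frac{du}{2\pi} \;=\; \int_{-|\gamma|}^{|\gamma|} e^{i\tau x}\,\frac{dx}{\pi\sqrt{|\gamma|^2-x^2}},
\end{equation}
which is exactly the characteristic function associated with the density $f(x;\gamma)$ in \Ref{arcsine}. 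Weak convergence then follows from L\'evy's continuity theorem.

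The routine parts are the Fourier diagonalization and the final change of variables; the only genuinely delicate step is the intermediate one, where I must justify that the $O(t^{-1})$ remainder in the exponent can be pulled out of the integral uniformly. This is straightforward because $|\gamma|\cos(k-\alpha)$ is smooth and $2\pi$-periodic, so the Taylor remainder is $O(\tau^2/t)$ uniformly in $k$, and the dominated convergence theorem applies; there is no stationary-phase issue because we are looking at a ratio of two exponentials with the $t$-scale cancelling.
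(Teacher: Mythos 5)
Your proposal is correct and follows essentially the same route as the paper's own proof: spatial Fourier transform reducing \Ref{defConti} to a scalar ODE, the Parseval representation of the characteristic function of $\CTQW{t}/t$, Taylor expansion of the cosine so that the $O(t)$ phase cancels, and the substitution $x=|\gamma|\sin u$ over the two monotone branches to recover the arcsine density. The only differences are immaterial sign conventions in the Fourier transform and your slightly more explicit handling of the uniform error bound and L\'evy's continuity theorem.
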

This limit distribution (\ref{arcsine}) is called the arcsine distribution since 
the cumulative distribution function is given by 
\begin{equation}
	\Pr [Z(\gamma) \leq x] := \int^{x}_{- \infty} f(y; \gamma) dy = \left\{ \begin{array}{ll} 0 & \ 
	{\rm on} \ x \leq -|\gamma| \\ 
	\frac{1}{\pi} \arcsin \frac{x}{|\gamma|} + \frac{1}{2} & \ {\rm on} \ -|\gamma|< x < |\gamma| \\
	1 & \ {\rm on} \ |\gamma| \leq 1 \end{array}. \right. 
\end{equation}
It is curiously remarked that 
the arcsine distribution (\ref{arcsine}) can be derived as the limit distribution of the 
quantum probability theory with the monotone independence~\cite{Hasebe} while the CTQW is 
independent of the stochastic process in quantum probability theory.
\begin{proof}
	Taking the spatial Fourier  transform 
	\begin{equation}
		\hat{\Psi}_t (k) := \sum_{x \in \Z} \Psi_t (x) e^{ikx},
	\end{equation}
	we obtain the solution of Eq.~(\ref{defConti}); 
	\begin{equation}
		- i \frac{\hat{\Psi}_t (k)}{\partial t} = \frac{1}{2} \left( \gamma e^{ik} \hat{\Psi}_t (k) + \overline{\gamma} e^{-ik} 
		\hat{\Psi}_t (k)\right);
	\end{equation}
	as 
	\begin{equation}
		\hat{\Psi}_t (k) = \Exp \left[ i |\gamma| t \cos (k + \arg (\gamma)) \right].
	\end{equation}
	From the definition of the spatial Fourier transform, the moment-generating function can be expressed as 
	\begin{align}
		\Ex \left[ e^{i \tau \CTQW{t} / t} \right] & = \int^{2 \pi}_{0} \hat{\Psi}^{\dagger}_t (k) \cdot 
		\hat{\Psi}_t \left(k + \frac{\tau}{t} \right) \frac{dk}{2 \pi} \notag \\
		& = \int^{2 \pi}_{0} \Exp \left[ i |\gamma| t \left\{ \cos \left(k + \arg (\gamma) 
		+ \frac{\tau}{t} \right) - 
		\cos (k + \arg (\gamma)) \right\} \right] \frac{dk}{2 \pi} \notag \\
		& = \int^{2 \pi}_{0} \Exp \left[ i |\gamma| t \sin (k + \arg (\gamma)) \frac{\tau}{t} + O \left( \frac{\tau^2}{t^2} \right) 
		\right] \frac{dk}{2 \pi} \notag \\
		& \to \int^{2 \pi}_{0} \Exp \left[ i |\gamma| \tau \sin (k + \arg (\gamma)) \right] \frac{dk}{2 \pi} 
		\ \ {\rm as} \ \ t \to \infty \notag \\
		& = \int^{|\gamma|}_{-|\gamma|} \frac{ \Exp \left[ i \tau x \right] }{ \pi \sqrt{|\gamma|^2 - x^2} } dx \notag \\
		& = \int^{\infty}_{-\infty} e^{i \tau x} \frac{\IND_{(-|\gamma|,|\gamma|)}(x)}{\pi \sqrt{|\gamma|^2-x^2}} dx.
	\end{align}
	Here, in the last second line, we take the transformed variable $x := |\gamma| \sin (k + \arg(\gamma))$. On the range of the 
	integration, since the parameter $k$ is ranged from $0$ to $2 \pi$, the maximum and minimum values of the parameter $x$ are 
	$|\gamma|$ and $-|\gamma|$, respectively. Since the parameter $x$ is doubly counted, 
	we have the last second line multiplied by two.
	Therefore, we obtain the desired result.
\end{proof}
\subsection{Final-time dependent random walk and lazy random walk}
To pursue the aim of this section, we introduce a final-time-dependent (FTD) walks, 
which are modified walks initiated by Strauch \cite{Strauch1}. 
Let $\ftd$ be the final time, that is, a particle keeps walking until the final time comes. 
At first, we show a construction of the CTRW from FTD-DTRWs. 
Secondly, we give CTQWs in some limit of the FTD-DTQW 
which is a quantum analogue of the final time dependent RW. 
Here, we assume the parameter $r(\ftd)>0$ with $r(\ftd) \to 0$ as $\ftd \to \infty$ for the 
FTD-DTRW and FTD-DTQW.

Let us define the FTD-DTRW on $\Z$, 
where the ``final time'' means the time that a particle stops the walk.  
The average of the waiting time of particle movement is $1/r(\ftd)$. 
Here, $r(\ftd)$ is the probability that the particle moves by the final time $\ftd$. 
The number of particle movements in the walk by the final time $\ftd$ is evaluated as $r(\ftd) \ftd$. 
A particle spends her most time being lazy since the rate of movements $r(\ftd)$ tends to $0$ 
as $\ftd \to \infty$. 
This is called a lazy RW. In the following, we will show that the lazy RW with the number of particle's 
movements $\ftd r(\ftd)$ can be taken as the CTRW with the final time $t$ for sufficiently large $\ftd$.

Let $\LA{m}$ be the lazy RW at time $m \in \{0,1,2,\dots, \ftd \}$ defined by 
\begin{equation}\label{lazyeq}
	p_{m}(x)=(1- r(\ftd)) p_{m-1} (x) + \frac{r(\ftd)}{2} \bigg\{p_{m-1}(x-1)+p_{m-1}(x+1)\bigg\}, 
	\ p_0(x)= \delta_{x,0}, 
\end{equation}
where $p_m(x) := \Pr (\LA{m}=x) \ ( x \in \N)$. 
This means that a particle stays at the same place with the probability $1-r(\ftd)$, and 
the particle jumps left or right with probability $1/2$ when the moving opportunity comes 
with probability $r(\ftd)$. 
Put a vector of the probability density as 
\begin{equation}
	\ket{\mu_m} : = \left[ \begin{array}{c} \vdots \\ p_m(-1) \\ p_m(0) \\ p_m(1) \\ \vdots \end{array} 
	\right] \in \HH_p.
\end{equation}
It is noted that this vector $\ket{\mu_m}$ can be expressed as the real-valued vector.
Equation (\ref{lazyeq}) is equivalent to 
\begin{equation}
	\ket{\mu_m} = \left\{ \ID + r(\ftd) \left( \frac{A}{2} - \ID \right) \right\} \ket{\mu_{m-1}}, 
	\ \ket{\mu_0}=\ket{0}, 
\end{equation}
where $A \ket{x} := \ket{x+1} + \ket{x-1}$ for any $x \in \N$. Here, $\ket{\mu_0}$ means 
$p_0 (x) = \delta_{x,0}$. 
Therefore, we have under the assumption of $\ftd r^2(\ftd) \to 0$ as $\ftd \to \infty$, 
\begin{equation} \label{ctct}
	\ket{\mu_\ftd} \sim \Exp \left[ \ftd r(\ftd) \left( \frac{A}{2} - \ID \right) \right] \ket{\mu_0}.
\end{equation}
Replace the particle movements $\ftd r(\ftd)$ with a continuous parameter $t$. 
Partially differentiating Eq.~(\ref{ctct}) with respect to the parameter $t$, 
we obtain 
\begin{align} 
	\frac{\partial}{\partial t} \ket{\mu_t} & \sim \left( \frac{A}{2} - \ID \right) 
	\Exp \left[ t \left( \frac{A}{2} - \ID \right) \right] \ket{\mu_0} \notag \\
	& \sim \left( \frac{A}{2} - \ID \right) \ket{\mu_t}.
\end{align}
Therefore, we can express $p_\ftd (x) \sim m_s (x)$ for sufficiently large $\ftd$, where 
$m_s(x)$ satisfies 
\begin{equation}\label{ContiRW}
	\frac{\partial}{\partial s}m_{s}(x)
	=\frac{1}{2}\left\{m_s (x+1) + m_s (x-1) \right\}-m_s(x), 
	\ m_0(x)= \df (x) \ {\rm on} \ s \leq t.
\end{equation}
This differential equation corresponds to the CTRW with time $t$. 
We have the central limit theorem by 
using the Fourier transform in the following: 
if $\ftd r(\ftd) \to \infty$, as $\ftd \to \infty$, then 
\begin{equation} \label{LazyCentral}
	\frac{\LA{\ftd}}{\sqrt{\ftd r(\ftd)}} \Rightarrow N(0,1) \ {\rm as} \ \ftd \to \infty. 
\end{equation}
It is remarked that this equation can be shown without $\ftd r(\ftd) \to 0$ as $\ftd \to \infty$
In particular, in the case of $r(\ftd) \sim r/\ftd^\alpha$ with 
$0 < r <1$ and $0 \leq \alpha$, we obtain a crossover from the diffusive to the localized spreading: 
as $\ftd \to \infty$, if $0 \leq \alpha <1$, then 
\begin{equation}\label{DRtoCR}
	\frac{\LA{\ftd}}{\sqrt{\ftd^{1-\alpha}}} \Rightarrow N(0,r)
\end{equation}
and if $\alpha=1$, then 
\begin{equation} \label{DRtoCR2}
	\Pr (\LA{\ftd}=n) \sim e^{-r} I_n (r), 
\end{equation}
where $I_\nu(z)$ is the modified Bessel function of order $\nu$; 
\begin{equation}
	I_\nu (z) = \sum^{\infty}_{m = 0} \frac{1}{2^{2 m + \nu} \, \Gamma (m + 1) 
	\Gamma (\nu + m + 1)} z^{2 m + \nu}.
\end{equation} 
Here, the gamma function $\Gamma (z)$ is given by 
\begin{equation}
	\Gamma (z) = \int^{\infty}_{0} y^{1-z} e^{-t} dy. 
\end{equation}
Note that Eq.(\ref{DRtoCR2}) comes from the correspondence between $\ftd r(\ftd)$ and $t$, and 
the Fourier transform for Eq. (\ref{ContiRW}): 
for large $\ftd$, 
\begin{equation}
	p_\ftd (x) \sim m_t(x) = e^{-t}I_x(t), 
\end{equation}
where and if $\alpha>1$, then we can easily see that $\Pr (\LA{\ftd}=x) \to \df (x)$. 
\subsection{Final-time dependent discrete time quantum walk}
In the following, we will consider a quantum-mechanical analogue of the above method of 
continuum approximation to the lazy RW. The FTD coin operator is defined by 
\begin{equation}\label{Qcoin}
	C_\ftd := \begin{bmatrix} \sqrt{r(\ftd)} & \sqrt{1-r(\ftd)}  \\  \sqrt{1-r(\ftd)} & 
	-\sqrt{r(\ftd)}  \end{bmatrix}. 
\end{equation}
This is defined as a quantum-mechanical analogue to the FTD stochastic coin of the correlated 
RW defined~\cite{KonnoCor}. 
According to Ref.~\cite{Strauch1}, the absolute values of diagonal parts of the quantum coin are sufficiently small 
and independent of the final time $\ftd$. According to Ref.~\cite{Romanelli1}, 
the coin operator is changed at each time, and its diagonal parts at time $m(<\ftd)$ 
are given in proportion to $1/m^\alpha$ ($0 \leq \alpha \leq 1$). 
On the other hand, in our model, all elements of the quantum coin depend on the final time $\ftd$. 
The above quantum coin (\ref{Qcoin}) shows that a particle moves in the same direction of the 
previous step with the probability amplitude $\sqrt{r(\ftd)}$ (left case) 
and $-\sqrt{r(\ftd)}$ (right case), and the opposite one with $\sqrt{1-r(\ftd)}$ (left and right cases).
This is called an FTD-DTQW. We give the following lemma which shows that 
the FTD-DTQW is expressed as a linear combination of some CTQWs for sufficiently large $\ftd$. 
The following lemma is consistent to Refs. \cite{Strauch1,Romanelli1} except for the time scaling. 
Let $\ftd$ be the final time for the FTD-DTQW and $t$ be the time for the corresponding CTQWs. 
We define a quantum analogue of the waiting time of a quantum particle movement by $\ftd / t$. 
\begin{lem}\label{DtoC}
	Let $\Psi_n^{(F)}(x)$ be the coin state of the FTD-DTQW at time $\ftd$ and position $x$. 
	Put $t=\ftd \sqrt{r(\ftd)}$ with $\ftd r(\ftd)= o(1)$ for large $\ftd$. 
	$\Psi_\ftd^{(F)}(x)$ is asymptotically expressed by 
	\begin{equation}
		\Psi_\ftd^{(F)}(x) \sim \frac{1}{2}\left( \Psi_t^{(+)}(x)+(-1)^\ftd \Psi_t^{(-)}(x) \right)
	\end{equation}
	with $t=n\sqrt{r(\ftd)}$, where $\Psi_s^{(\pm)}(x)$ satisfies the following Schr\"{o}dinger equation:
	\begin{align}
	-i\frac{\partial}{\partial s} \Psi_s^{(\pm,\xi)}(x) &= \pm \frac{1}{2} \left(i\Psi_s^{(\pm,\xi)}(x-1)
	-i\Psi_s^{(\pm,\xi)}(x+1)\right),\;\; (\xi \in \{L,R\}) \label{qwsch} \\
	\Psi_0^{(\pm,L)}(x) &= q_L \delta_{x,0} \pm q_R \delta_{x,1}, \ \ \Psi_0^{(\pm,R)}(x) = q_R 
	\delta_{x,0} \pm q_L \delta_{x,-1}, \label{qwsch1}
	\end{align} 
	where $\Psi_s^{(\pm,\xi)}(x) =\bkt{\xi}{\Psi_s^{(\pm)}(x)}, (\xi \in \{ R,L \})$. 
\end{lem}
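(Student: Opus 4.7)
The plan is to adapt the spatial Fourier-transform method used in the proof of Theorem \ref{konno_proof}. In Fourier space the FTD-DTQW evolution reduces to a single $k$-parametrized $2\times 2$ matrix raised to the $\ftd$-th power, and the lemma reduces to the statement that this matrix power, under the scaling $t=\ftd\sqrt{r(\ftd)}$ with $\ftd r(\ftd)=o(1)$, is asymptotically a superposition of two exponentials whose generators coincide with the Hamiltonians of the CTQWs described by (\ref{qwsch}).

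The first step is to write the one-step operator in Fourier space as
\begin{equation*}
U_\ftd(k) = \begin{bmatrix} e^{ik} & 0 \\ 0 & e^{-ik} \end{bmatrix} C_\ftd = \begin{bmatrix} e^{ik}\sqrt{r(\ftd)} & e^{ik}\sqrt{1-r(\ftd)} \\ e^{-ik}\sqrt{1-r(\ftd)} & -e^{-ik}\sqrt{r(\ftd)} \end{bmatrix},
\end{equation*}
from which $\Tr U_\ftd(k)=2i\sqrt{r(\ftd)}\sin k$ and $\det U_\ftd(k)=-1$, so the eigenvalues are $\lambda_\pm(k)=i\sqrt{r(\ftd)}\sin k\pm\sqrt{1-r(\ftd)\sin^2 k}$. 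Expanding for small $r(\ftd)$ yields $\lambda_+(k)=e^{i\sqrt{r(\ftd)}\sin k}(1+O(r(\ftd)^{3/2}))$ and $\lambda_-(k)=-e^{-i\sqrt{r(\ftd)}\sin k}(1+O(r(\ftd)^{3/2}))$, uniformly in $k$, together with leading-order eigenvectors $|v_\pm(k)\rangle=\frac{1}{\sqrt 2}(e^{ik},\pm 1)^{\TTT}$.

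The crucial step is to raise the eigenvalues to the $\ftd$-th power. The cumulative error from the $O(r(\ftd)^{3/2})$ correction is bounded by $\ftd\cdot O(r(\ftd)^{3/2})=O(\ftd r(\ftd)\sqrt{r(\ftd)})=o(1)$, which is exactly what the hypothesis $\ftd r(\ftd)=o(1)$ provides. Therefore
\begin{equation*}
\lambda_+(k)^\ftd \to e^{it\sin k}, \qquad \lambda_-(k)^\ftd \to (-1)^\ftd\, e^{-it\sin k}.
\end{equation*}
On the continuous-time side, Fourier-transforming (\ref{qwsch}) gives $\hat\Psi^{(\pm)}_s(k)=e^{\pm is\sin k}\hat\Psi^{(\pm)}_0(k)$, so these two phase factors are precisely the Fourier-space propagators of the two CTQWs appearing in the statement.

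To close the argument I would decompose the initial spinor $|\hat\Psi_0(k)\rangle=(q_L,q_R)^{\TTT}$ along the leading-order eigenbasis, obtaining coefficients $\beta_\pm(k)=\frac{1}{\sqrt 2}(e^{-ik}q_L\pm q_R)$, and reassemble
\begin{equation*}
|\hat\Psi_\ftd(k)\rangle \sim e^{it\sin k}\beta_+(k)|v_+(k)\rangle+(-1)^\ftd\,e^{-it\sin k}\beta_-(k)|v_-(k)\rangle.
\end{equation*}
Each spinor summand equals $\frac{1}{2}\hat\Psi^{(\pm)}_t(k)$ for initial data $\hat\Psi^{(\pm)}_0(k)=2\beta_\pm(k)|v_\pm(k)\rangle$, which upon inverse Fourier transform reproduces the prescriptions in (\ref{qwsch1}). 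The principal technical obstacle is the uniform-in-$k$ control of the remainders: both the $O(r(\ftd)^{3/2})$ eigenvalue error, amplified by $\ftd$ iterations, and the $O(\sqrt{r(\ftd)})$ perturbation of the eigenvectors must be shown to vanish inside the Fourier integral that recovers $\Psi^{(F)}_\ftd(x)$, so that replacing $U_\ftd(k)^\ftd$ by its leading-order asymptotic form is legitimate; the delicate identification of which finite-$\ftd$ terms are absorbed into the $o(1)$ error is the technical heart of the proof.
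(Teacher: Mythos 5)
Your proposal is correct and follows essentially the same route as the paper: a Fourier-space spectral decomposition of the one-step operator into two branches with propagators $e^{\pm i t\sin k}$ (the $(-1)^{\ftd}$ arising from the branch whose eigenvalue is near $-1$) and projectors onto the unperturbed eigenvectors, which are exactly the paper's $\frac{1}{2}\left(\ID\pm V_{\sigma_x}(k)\right)$. The only difference is presentational --- you diagonalize $U_{\ftd}(k)$ explicitly and perturb the eigenvalues, while the paper takes the matrix logarithm of $\widehat{C}_{\ftd}^{2}(k)$ and exploits the involution $V_{\sigma_x}(k)$ --- and the error budget you identify ($\ftd\, r(\ftd)^{3/2}=o(1)$ from the eigenvalues, $O(\sqrt{r(\ftd)})$ from the eigenvectors) is the same one the paper absorbs into its $O(\ftd\, r(\ftd))$ exponent.
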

It is noted that Eq. (\ref{qwsch}) can be taken as the CTQW with $\gamma = \pm i$ and the modified 
initial condition Eq. (\ref{qwsch1}).
\begin{proof}
By the spatial Fourier transform for Eq. (\ref{Qcoin}), we obtain 
\begin{equation}
	\widehat{C}_\ftd^2 (k) = \ID - 2i \sqrt{r(\ftd)} \sin k V_{\sigma_x}(k) +O(r(n)), 
\end{equation}
with $V_{\sigma_x}(k) := \left( e^{-i k} \kb{L} +e^{i k} \kb{R} \right){\sigma_x}$. 
When $|\det (\sqrt{r(\ftd)}\sin k V_{\sigma_x}(k))| <1$, so 
\begin{equation}
	\log(\widehat{C}_\ftd^2 (k)) = -2i \sqrt{r(\ftd)}\sin k V_{\sigma_x}(k)+O(r(\ftd)),
\end{equation} 
where $\det (A)$ is the determinant of $A$. 
Therefore, we have under the assumption of $\ftd r(\ftd) \to 0$ as $\ftd \to \infty$, 
\begin{equation} \label{DtoCMaster}
	\widehat{C}^{\ftd}(k)=(V_{\sigma_x}(k))^{n}e^{-i \ftd \sqrt{r(\ftd)}\sin k V_{\sigma_x}(k)+O(\ftd r(\ftd))}
	\sim (V_{\sigma_x}(k))^{\ftd} e^{-i \ftd \sqrt{r(\ftd)}\sin k V_{\sigma_x}(k)}. 
\end{equation}
Because of $(\ID \pm V_{\sigma_x}(k))e^{-i s \sin k V_{\sigma_x}(k)} = 
e^{\mp is\sin k}(\ID \pm V_{\sigma_x}(k))$ 
for any real number $s$, 
we obtain 
\begin{equation}\label{applox1}
	\widehat{C}^{\ftd}(k) \sim \frac{1}{2} \left( e^{-i \ftd \sqrt{r(\ftd)} \sin k } 
	(\ID+V_{\sigma_x}(k))+(-1)^{\ftd} 
	e^{i \ftd \sqrt{r(\ftd)}\sin k} (\ID - V_{\sigma_x}(k)) \right). 
\end{equation}
To see a relationship between the FTD-DTQW and the CTQW, 
we define $\widehat{\Psi}_s^{(\pm)}(k)\equiv e^{\mp is\sin k}\widehat{\Psi}^{(\pm)}_0(k)$ with 
$\widehat{\Psi}^{(\pm)}_0(k)\equiv (\ID \pm V_{\sigma_x}(k)) \ket{\phi_0}$, where 
$\ket{\phi_0}$ is the initial coin state.
It is noted that 
\begin{equation}
	\widehat{\Psi}_{\ftd}(k) = \frac{\widehat{\Psi}_{t}^{(+)}(k)+(-1)^{\ftd} \widehat{\Psi}_{t}^{(-)}(k)}{2},
\end{equation} 
with $t=\ftd \sqrt{r(\ftd)}$ and $\widehat{\Psi}_{s}^{(\pm)}(k)$ obeys 
\begin{equation}
	\pm i \frac{d}{ds}\widehat{\Psi}_s^{(\pm)}(k)=\sin k \widehat{\Psi}_s^{(\pm)}(k).
\end{equation}
\end{proof}
By the inverse Fourier transform for Eq. (\ref{applox1}) and the definition of the Bessel function, 
we get the following theorem. 
\begin{thm}\label{kamakura}
Let $\FTD{\ftd}$ be the FTD-DTQW at the final time $\ftd$. 
Put $t=\ftd \sqrt{r(\ftd)}$. 
When the initial coin state is $\ket{\phi_0} =q_L \ket{L} +q_R \ket{R}$ with $|q_L|^2+|q_R|^2=1$, then we have 
\begin{equation}
	\Pr (\FTD{\ftd}=x) \sim \frac{1+(-1)^{\ftd + x}}{2}\mathcal{J}(x;t), 
\end{equation}
where 
\begin{equation}
	\mathcal{J}(x;t) := \left\{1- (\overline{q_R} q_L + q_R \overline{q_L}) 
	\frac{2x}{t}\right\}J_x^2(t)+|q_L|^2 J_{x-1}^2(t)+|q_R|^2 J_{x+1}^2(t).
\end{equation}
Here $J_\nu(z)$ is the Bessel function of the first kind of order $\nu$ as 
\begin{align}
	J_\nu(z) & = \sum^{\infty}_{m=0} \frac{(-1)^m}{2^{2 m + \nu} \Gamma (m) \Gamma (m + \nu)} z^{2 m + \nu}.
\end{align} 
\end{thm}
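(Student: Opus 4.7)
My plan is to combine Lemma~\ref{DtoC}, which already expresses $\Psi_\ftd^{(F)}(x)$ as a parity-weighted superposition of two CTQW-type wavefunctions $\Psi_t^{(\pm)}(x)$, with an explicit closed form for those wavefunctions obtained via Fourier inversion using Bessel-function identities.

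First, I would solve (\ref{qwsch})--(\ref{qwsch1}) explicitly. The spatial Fourier transform diagonalises the discrete Laplacian on the right of (\ref{qwsch}) and reduces it to a first-order ODE whose solution $\widehat{\Psi}_t^{(\pm,\xi)}(k) = e^{\mp i t \sin k}\widehat{\Psi}_0^{(\pm,\xi)}(k)$ is already recorded in the proof of Lemma~\ref{DtoC}. The Fourier transform of (\ref{qwsch1}) is an affine combination of $1$ and $e^{\pm i k}$ with coefficients in $\{q_L, \pm q_R\}$ or $\{q_R, \pm q_L\}$, depending on the coin component. Inverting via the Jacobi--Anger expansion $e^{iz\sin k} = \sum_{n} J_n(z) e^{ink}$ and the parity identity $J_{-n}(t) = (-1)^n J_n(t)$ yields each $\Psi_t^{(\pm,\xi)}(x)$ as an explicit linear combination of $J_x(t)$ and an adjacent Bessel function $J_{x\pm 1}(t)$, multiplied by a sign factor that is trivial for one of the branches and equals $(-1)^x$ for the other.

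Second, I would substitute into Lemma~\ref{DtoC}. Because the extra $(-1)^x$ sits in only one branch, the superposition $\Psi_\ftd^{(F,\xi)}(x) \sim \tfrac{1}{2}(\Psi_t^{(+,\xi)}(x) + (-1)^\ftd\Psi_t^{(-,\xi)}(x))$ factors as a universal prefactor $\tfrac{(-1)^x+(-1)^\ftd}{2}$ times a single $\xi$-dependent Bessel combination. Squaring this prefactor gives $\tfrac{1+(-1)^{\ftd+x}}{2}$, which is the parity indicator in the theorem: the FTD-DTQW is supported on positions of the same parity as $\ftd$.

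Finally, I would compute $\Pr(\FTD{\ftd}=x) = |\Psi_\ftd^{(F,L)}(x)|^2 + |\Psi_\ftd^{(F,R)}(x)|^2$ and simplify. Expanding each squared modulus produces diagonal contributions $|q_L|^2 J_\bullet^2(t)$ and $|q_R|^2 J_\bullet^2(t)$ together with a cross term proportional to $(\overline{q_R}q_L + q_R\overline{q_L})\,J_x(t)\bigl(J_{x-1}(t)+J_{x+1}(t)\bigr)$. The Bessel recurrence $J_{x-1}(t)+J_{x+1}(t) = (2x/t)J_x(t)$ collapses this cross term into the coefficient of $J_x^2(t)$ appearing in $\mathcal{J}(x;t)$, while the remaining diagonal pieces reassemble into $|q_L|^2 J_{x-1}^2(t) + |q_R|^2 J_{x+1}^2(t)$. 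The main obstacle is the careful bookkeeping of signs and index shifts in the Jacobi--Anger step: the extra $(-1)^x$ must attach to exactly one branch so that the parity indicator emerges correctly, the cross-products must line up so that the Bessel recurrence produces the right sign in the $J_x^2$ coefficient, and the $L/R$ coin components must pair with the correct adjacent Bessel indices $J_{x\pm 1}$; once those signs are tracked, the remaining manipulations are routine.
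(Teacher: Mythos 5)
Your proposal is correct and follows exactly the route the paper indicates (the paper itself only remarks that the theorem follows ``by the inverse Fourier transform for Eq.~(\ref{applox1}) and the definition of the Bessel function''): you invert the two branches of Lemma~\ref{DtoC} via the Jacobi--Anger expansion, extract the parity prefactor using $J_{-n}(t)=(-1)^nJ_n(t)$, and collapse the cross term with the recurrence $J_{x-1}(t)+J_{x+1}(t)=(2x/t)J_x(t)$. You also rightly identify the sign and index bookkeeping as the only delicate point; carried out with the paper's (somewhat loosely stated) Fourier conventions, it reproduces $\mathcal{J}(x;t)$.
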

The following theorem shows that 
the weak convergence theorem of the FTD-DTQW also holds 
without the assumption of $\ftd r(\ftd) \to 0$ as $\ftd \to \infty$. 
\begin{thm}\label{COQCD}
	Let the initial coin state be $\ket{\phi_0}= q_L \ket{L} +q_R \ket{R}$ with $|q_L|^2+|q_R|^2=1$. 
	Assume $\ftd \sqrt{r(\ftd)} \to \infty$ as $\ftd \to \infty$. 
	Then we have as $\ftd \to \infty$, 
	\begin{equation}\label{WCDtoC}
		\frac{\FTD{\ftd}}{\ftd \sqrt{r(\ftd)}} \Rightarrow A^{\phi_0}(1), 
	\end{equation}
	where $A^{\phi_0}(y)$ has the following density:
	\begin{equation}
		f(x;y) = \frac{ \IND_{(-|y|,|y|)}(x)}{\pi \sqrt{y^2-x^2}} \left\{ 
		1- \frac{(\overline{q_R} q_L + q_R \overline{q_L})x}{y} \right\}. 
	\end{equation} 
\end{thm}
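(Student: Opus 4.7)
The plan is to prove the weak convergence by establishing pointwise convergence of the characteristic function of $\FTD{\ftd}/t$ (where $t=\ftd\sqrt{r(\ftd)}$), following the spectral Fourier-analytic strategy used in the proofs of Theorems \ref{konno_proof} and \ref{Clim}, but applied directly to the FTD one-step operator rather than via Lemma \ref{DtoC}. The lemma is not available as a black box here because its derivation required $\ftd r(\ftd)\to 0$, a hypothesis that Theorem \ref{COQCD} explicitly drops; the $O(\ftd r(\ftd))$ remainder of equation (\ref{DtoCMaster}) must therefore be controlled independently.

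First I would diagonalise the Fourier-space one-step operator $\widehat C_\ftd(k):=\mathrm{diag}(e^{-ik},e^{ik})C_\ftd$ exactly. Its trace $-2i\sqrt{r(\ftd)}\sin k$ and determinant $-1$ place both eigenvalues on the unit circle, with $\lambda_+^\ftd(k)=e^{-i\ftd\theta(k)}$ and $\lambda_-^\ftd(k)=(-1)^\ftd e^{i\ftd\theta(k)}$ for $\theta(k):=\arcsin(\sqrt{r(\ftd)}\sin k)$; the corresponding eigenvectors $\ket{v_\pm(k)}$ depend smoothly on $k$ and on $r(\ftd)$, and as $r(\ftd)\to 0$ they converge to the $\pm 1$-eigenvectors of $V_{\sigma_x}(k)$. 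Plugging the spectral expansion of $\hat\Psi_\ftd(k)=\widehat C_\ftd(k)^\ftd\ket{\phi_0}$ into the Fourier identity
\begin{equation*}
\Ex\!\left[e^{i\tau \FTD{\ftd}/t}\right]=\int_{-\pi}^{\pi}\bkt{\hat\Psi_\ftd(k+\tau/t)}{\hat\Psi_\ftd(k)}\,\frac{dk}{2\pi}
\end{equation*}
expands the integrand into four terms indexed by $(\xi,\xi')\in\{+,-\}^2$. The two diagonal terms ($\xi=\xi'$) have slowly-varying phase $\pm\ftd[\theta(k+\tau/t)-\theta(k)]$, which tends to $\pm\tau\cos k$ because $(\ftd/t)\theta'(k)=\cos k/\sqrt{1-r(\ftd)\sin^2 k}\to\cos k$ as $r(\ftd)\to 0$, while the eigenvector overlaps converge to $|\bkt{v_\pm(k)}{\phi_0}|^2=\tfrac12(1\pm 2\,\Re[e^{ik}q_L\overline{q_R}])$. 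The off-diagonal terms ($\xi\neq\xi'$) carry a phase $\pm\ftd[\theta(k+\tau/t)+\theta(k)]$ whose $k$-derivative is of order $t\to\infty$, so a Riemann-Lebesgue / non-stationary-phase argument makes them vanish in the limit, disposing of the $(-1)^\ftd$ parity prefactor responsible for the parity condition in Theorem \ref{kamakura} along the way.

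Assembling the surviving pieces, killing the $\sin k$ component of $\Re[e^{ik}q_L\overline{q_R}]$ by parity, and substituting $x=\cos k$ recasts the limit as $\int_{-1}^{1}e^{i\tau x}f(x;1)\,dx$, whereupon L\'evy's continuity theorem yields the weak convergence. The main obstacle I anticipate is the rigorous vanishing of the off-diagonal oscillatory terms in the regime $\ftd r(\ftd)\not\to 0$: one has to check that the smooth prefactor (a product of eigenvector overlaps $\bkt{v_\xi(k+\tau/t)}{v_{\xi'}(k)}$ and inner products with $\ket{\phi_0}$, all of which depend on $r(\ftd)$) has uniformly bounded total variation in $k$, while the total-phase derivative stays bounded below by $c\,t$ for some $c>0$, so that a single integration by parts delivers $o(1)$ decay uniformly for $\tau$ in any compact set.
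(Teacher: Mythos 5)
Your proposal follows essentially the same route as the paper's proof: both diagonalize the Fourier-space one-step operator $\widehat{C}_\ftd(k)$ exactly, compute the characteristic function of $\FTD{\ftd}/(\ftd\sqrt{r(\ftd)})$ from the spectral decomposition, and pass to the limit in the diagonal terms, with your eigenvalue phases $\mp\arcsin\left(\sqrt{r(\ftd)}\sin k\right)$ and limiting eigenprojections $\tfrac12\left(\ID\pm V_{\sigma_x}(k)\right)+O\left(\sqrt{r(\ftd)}\right)$ matching the paper's formulas. The only cosmetic difference is that you dispose of the cross terms by a non-stationary-phase estimate, whereas the paper handles them (implicitly) through the exact orthogonality of the spectral projections of the unitary $\widehat{C}_\ftd(k)$, which already makes those terms $O(1/t)$ pointwise with no oscillation argument needed.
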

\begin{proof}
	We show that Eq. (\ref{WCDtoC}) holds without assumption $\ftd r(\ftd) \to 0$ as $\ftd \to \infty$. 
	The Fourier transform for the quantum coin $C_\ftd$ (\ref{Qcoin}) is described as 
	\begin{equation}
		\widehat{C}_\ftd (k)=
	\begin{bmatrix} e^{-ik}\sqrt{r(\ftd)} & e^{-ik}\sqrt{1-r(\ftd)} \\ e^{ik}\sqrt{1-r(\ftd)} 
	& -e^{ik}\sqrt{r(\ftd)}  \end{bmatrix}. 
	\end{equation}
Let the eigenvalue and corresponding eigenvector of $\widehat{C}_\ftd (k)$ 
be $e^{i \theta^{(\pm)}_\ftd (k)}$ and $\ket{v^{(\pm)}_\ftd (k)}$. 
Then we obtain 
\begin{align}
\cos \theta^{(\pm)}_n(k) &= \pm \sqrt{1-r(\ftd)\sin ^2 k}, \\ \sin \theta^{(\pm)}_\ftd (k) &=-\sqrt{r(\ftd)}\sin k, \\
\pi^{(\pm)}_\ftd (k) \equiv \kb{v^{(\pm)}_\ftd (k)}  
&= \frac{1}{2}\left( \ID \pm V_{\sigma_x} (k) \right) + O \left( \sqrt{r(\ftd)} \right). 
\end{align}
Let $\FTD{s}$ be the FTD-DTQW at time $s$ with the initial coin state 
$\ket{\phi_0} = q_L \ket{L} + q_R \ket{R}$. It is noted that 
\begin{equation}
	\theta_\ftd^{(\pm)} (k+\xi/t) - \theta_\ftd^{(\pm)}(k)= \mp \xi \frac{\sqrt{r(\ftd)}}{t}\cos k
        +O \left(\frac{(\sqrt{r(\ftd)})^3}{t}\right),
\end{equation}
where $t=\ftd \sqrt{r(\ftd)}$. We have 
\begin{align}
\Ex [e^{i\xi \FTD{\ftd}/t}] 
	&=\int_{0}^{2\pi} \bra{\phi_0} (\widehat{C}^\ftd (k))^\dagger \widehat{C}^\ftd 
	(k+\xi/t) \ket{\phi_0} \frac{dk}{2\pi} \notag \\
	&=\int_{0}^{2\pi} 
        \sum_{j\in \{0,1\}} e^{-(-1)^j i\xi\cos k+O\left(\sqrt{r(\ftd)}\right)} \notag \\
        &\;\;\;\;\;\;\;\;\; \times \frac{1}{2} \bra{\phi_0} \left(\ID +(-1)^j V_{\sigma_x}(k)\right) 
		\ket{\phi_0} \frac{dk}{2\pi} 
		+O(\sqrt{r(\ftd)}) \notag \\
        &\to \int_{-\infty}^{\infty}e^{i \xi x} \bigg\{1-(q_L \overline{q_R}+\overline{q_L}q_R) x \bigg\}
		\frac{\IND_{(-1,1)}(x)}{\pi\sqrt{1-x^2}}dx
        \ {\rm as} \ \ftd \to \infty,
\end{align}
where $t=\ftd \sqrt{r(\ftd)}$. Then, we get the desired conclusion. 
\end{proof}
From Theorems \ref{kamakura} and \ref{COQCD}, we obtain the following limit theorem 
which shows a crossover from the localized to the ballistic spreading. 
\begin{col}\label{DQtoCQ}
If $\sqrt{r(\ftd)}=r/\ftd^{\alpha}$ with $0<r<1$, then as $\ftd \to \infty$, 
\begin{equation}
\frac{\FTD{\ftd}}{\ftd^{1-\alpha}} \Rightarrow 
	\begin{cases}
        	K(r) & {\rm if} \ \alpha = 0, \\ 
        	A^{\phi_0} (r) & {\rm if} \ 0< \alpha < 1, 
	\end{cases} 
\end{equation}
and if $\alpha=1$, then 
\begin{equation}
	\Pr ( \FTD{\ftd} = x ) \sim \frac{1+(-1)^{n+x}}{2} \mathcal{J}(x;r). 
\end{equation}
and if $\alpha>1$, then 
\begin{equation}
	\Pr (\FTD{\ftd} =x) \to \df (x)
\end{equation}
as $\ftd \to \infty$. 
\end{col}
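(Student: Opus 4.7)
The strategy is to cover each of the four $\alpha$-regimes by applying a different earlier result, since under the scaling $\sqrt{r(\ftd)}=r/\ftd^{\alpha}$ the two hypotheses $\ftd r(\ftd)\to 0$ (needed for Theorem \ref{kamakura}) and $\ftd\sqrt{r(\ftd)}\to\infty$ (needed for Theorem \ref{COQCD}) partition cleanly across $\alpha\in\{0\}\cup(0,1)\cup\{1\}\cup(1,\infty)$. So the plan is simply to verify which hypothesis is active for each $\alpha$ and then quote the corresponding theorem, with a small scaling step glue for the continuous range.

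For $\alpha=0$, the parameter $r(\ftd)=r^2$ is constant, so the FTD coin (\ref{Qcoin}) is homogeneous and time-independent with $|a|=r$, and Theorem \ref{konno_proof} yields $\FTD{\ftd}/\ftd \Rightarrow K(r)$ immediately. For $0<\alpha<1$, we have $\ftd\sqrt{r(\ftd)}=r\ftd^{1-\alpha}\to\infty$, so Theorem \ref{COQCD} gives $\FTD{\ftd}/(r\ftd^{1-\alpha})\Rightarrow A^{\phi_0}(1)$; the remaining task is a scaling identity, namely that the change of variable $x\mapsto x/r$ with Jacobian $1/r$ in the density $f(x;1)$ reproduces $f(x;r)$, so if $Y\sim A^{\phi_0}(1)$ then $rY\sim A^{\phi_0}(r)$, which upgrades the statement to $\FTD{\ftd}/\ftd^{1-\alpha}\Rightarrow A^{\phi_0}(r)$.

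For $\alpha\ge 1$, a direct computation gives $\ftd r(\ftd)=r^2/\ftd^{2\alpha-1}\to 0$, so Theorem \ref{kamakura} applies with $t=\ftd\sqrt{r(\ftd)}=r/\ftd^{\alpha-1}$. When $\alpha=1$ this $t$ equals the constant $r$ and the stated formula follows by direct substitution. When $\alpha>1$ we have $t\to 0$, and $\Pr(\FTD{\ftd}=x)\to \df(x)$ is extracted by inserting the small-argument expansions $J_\nu(t)\sim t^\nu/(2^\nu\Gamma(\nu+1))$ for $\nu\ge 0$ and $J_{-\nu}(t)=(-1)^\nu J_\nu(t)$ into $\mathcal{J}(x;t)$: $J_0(t)\to 1$ gives $\mathcal{J}(0;t)\to 1$, while for $x\ne 0$ every summand of $\mathcal{J}(x;t)$ vanishes as $t\to 0$.

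The main obstacle I expect is the $|x|=1$ piece of the $\alpha>1$ case. There the coefficient $\{1-2(\overline{q_R}q_L+q_R\overline{q_L})x/t\}$ inside $\mathcal{J}(x;t)$ diverges like $1/t$ while $J_{\pm 1}^2(t)=O(t^2)$ vanishes, so one must combine these carefully to see that the product is only $O(t)$, and then confirm that the remaining $|q_L|^2 J_0^2(t)$ or $|q_R|^2 J_0^2(t)$ survives before being killed by the scaling; one must also clarify the meaning of $\df(x)$ in light of the parity factor $(1+(-1)^{\ftd+x})/2$ from Theorem \ref{kamakura}, which forces support on positions of the same parity as $\ftd$, so the convergence should be interpreted either along the even-parity subsequence or as confinement of $\FTD{\ftd}$ to a bounded neighbourhood of the origin. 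Once these points are settled, the four cases assemble into the corollary exactly as stated.
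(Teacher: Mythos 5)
Your proposal is correct and follows essentially the same route as the paper's own (very terse) proof: assign $\alpha=0$ to Theorem~\ref{konno_proof}, $0<\alpha<1$ to Theorem~\ref{COQCD} together with the rescaling identity $rA^{\phi_0}(1)=A^{\phi_0}(r)$ in distribution, and $\alpha\geq 1$ to Theorem~\ref{kamakura} with $t=r\ftd^{1-\alpha}$. The parity subtlety you flag for $\alpha>1$ --- that $\mathcal{J}(\pm 1;t)\to |q_L|^2,\ |q_R|^2$ as $t\to 0$, so $\Pr(\FTD{\ftd}=x)\to\df(x)$ holds literally only along even $\ftd$ and must otherwise be read as confinement to a bounded neighbourhood of the origin --- is real and is simply glossed over in the paper, so your extra care there is an improvement rather than a deviation.
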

\begin{proof}
Since the $\alpha=0$ case corresponds to the DTQW, $\FTD{\ftd}/\ftd \Rightarrow K(r)$ as $\ftd \to \infty$. 
The $\sqrt{r(\ftd)}=\ftd^\alpha$ with $0 < \alpha < 1$ satisfies the condition $\ftd \sqrt{r(n)} \to 0$ as 
$\ftd \to \infty$. Therefore, it follows from Theorem~\ref{COQCD} that 
$\FTD{\ftd}/\ftd^{1-\alpha} \Rightarrow A^{\phi_0}$ with $0<\alpha<1$. 
The result on $\alpha \geq 1$ case derives from Theorem~\ref{kamakura}. 
\end{proof}
\section{One-dimensional Discrete Time Quantum Walk as Quantum Simulator} \label{qs_sec}
\begin{figure}[ht]
	\centering
	\includegraphics[width=13cm]{./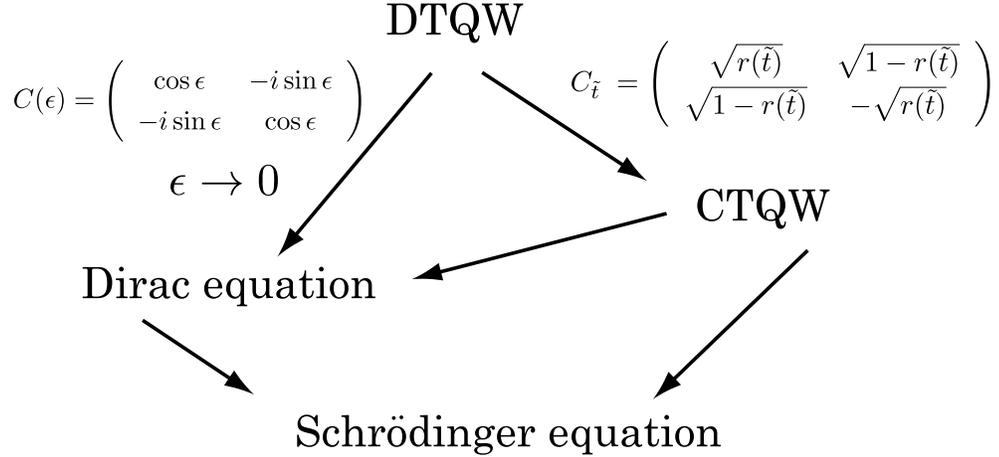}
	\caption{The relationship among the DTQW, the CTQW, and the Dirac and Schr\"{o}dinger equations on 
	a one-dimensional space for the asymptotic behaviors. The DTQW can derive the Dirac equation and 
	the CTQW. These can derive the Schr\"{o}dinger equation. It is noted that the CTQW on the graph 
	can derive the 1+1 dimensional Dirac equation~\cite{Childs}.}
	\label{fromdtqw}
\end{figure}
Summarizing the above results, the DTQW can simulate the Dirac equation and the CTQW as the asymptotic 
behavior on the one-dimensional space. While the DTQW is run under non-relativistic quantum mechanics, 
the DTQW can simulate the relativistic situation. Therefore, the DTQW can be taken 
as a quantum simulator, which is different from the original idea of Richard 
Feynman~\cite{FEYNMAN}. Realizing the DTQW means that such experimental setup can 
test the quantum world. Compared to the original theory, the parameters 
of the quantum coin and the shift in the real space may be connected to the parameters and 
the constant number in the original theory. This means that we can construct the alternative 
theory beyond the quantum theory inside quantum theory. It should be emphasized that this theory 
is not realized but is consistent and can be interpreted. However, the DTQW has the potential 
to the test of the physical constant by changing this as the parameter in the alternative theory. 
On the other words in mathematics, the DTQW can derive the Dirac equation, the CTQW, 
and the Schr\"{o}dinger equation. Here, the Schr\"{o}dinger equation can be 
derived from the approximation of the Dirac equation and the CTQW. Therefore, the DTQW is 
mathematically a superordinate concept. These relationships can be illustrated in Fig.~\ref{fromdtqw}.
\section*{Acknowledgment}
The author acknowledges useful collaborations and discussion with Etsuo Segawa, Kota Chisaki and Norio Konno. The author also thanks Hosho Katsura and Takuya Kitagawa for useful discussion. The author would like to thank the use of the utilities of Tokyo Institute of Technology and Massachusetts Institute of Technology and many technical and secretary supports. The author is grateful to the financial supports from JSPS Research Fellowships for Young Scientists (No. 21008624), JSPS Excellent Young Researcher Overseas Visit Program, Global Center of Excellence Program ``Nanoscience and Quantum Physics" at Tokyo Institute of Technology during his Ph.D study.
\appendix
\section{Classical stochastic process}
In this appendix, we define the stochastic process as 
\begin{defin}[Stochastic process]
	The stochastic process is defined as 
	\begin{equation}
		\{ X_t (\omega), t \in T \} \ \omega \in \Omega,
	\end{equation}
	where $T$ means time range and is assumed as the natural number $T = \N$ to 
	simplify the discussion.
\end{defin}
The simplest example is the Markov process, which is defined that $X_{t+1}$ is independent of 
$X_t$ for any $t \in T$. One of the Markov process is the random walk (RW). 
\begin{defin}[Time-independent and spatial-independent one-dimensional random walk]
	The time-independent and spatial-independent one-dimensional random walk is defined as the 
	independent and identical distributed (i.i.d.) sequence $X_t$ to satisfy  
	\begin{align}
		\Pr [X_{t+1} = x ] & = p \Pr [X_t = (x -1) ] + (1 - p) \Pr [X_t = (x + 1) ], \notag \\
		\Pr [X_0 = x] & = \delta_{x,0}
	\end{align}
	for any $t \in \N$ and $x \in \Z$.
\end{defin}

To describe the asymptotic behavior for the random variable $X_n$ with the suffix $n \in \N$, 
we define the weak convergence of distribution~\footnote{This is also called the 
weak convergence, convergence of distribution, or the probability convergence.} as 
\begin{defin}[Weak convergence of distribution]
	Let $\{ X_n \}$ be the sequence of the random variables. The random variable $X_n$ weakly 
	converges to the random variable $Y$ denoted as 
	\begin{equation}
		X_n \Rightarrow Y
	\end{equation}
	if and only if, for any $\epsilon >0$, 
	\begin{equation}
		\Pr [|X_n - Y|> \epsilon] \to 0 \ \ {\rm as} \ n \to \infty.
	\end{equation}
\end{defin}
This definition can be expressed from the viewpoint of the probability measure as 
\begin{equation}
	\lim_{n \to \infty} \int_{\omega \in \Omega} f(X_n [\omega]) dP_n [\omega] = 
	\int_{\omega \in \Omega} f(Y[\omega]) dP [\omega], 
\end{equation}
where $dP_n$ and $dP$ are defined in the random variable $X_n$ and $Y$, respectively.
In the case of the time-independent and spatial-independent one-dimensional RW, we obtain 
the asymptotic behavior in the following: 
\begin{thm}[Central limit theorem~\footnote{This is often called the de Moivre-Laplace theorem. 
The general version of this theorem to satisfy Eq.~(\ref{centraleq}) 
for the independent sequence $\{ Y_n \}$ was 
proven by Lindeberg to see more details in the book~\cite{rw_book}.}] \label{central}
	Let $\{ Y_n \}_{n \in \N}$ be the i.i.d. sequence to satisfy 
	\begin{equation}
		\Pr [Y_n = 1] = p \ \ \ \ \Pr [Y_n = 1] = 1 - p
	\end{equation}
	for any $n \in \N$. Put $S_m = \sum_{k = 1}^{m} Y_k$. Then, we obtain 
	\begin{equation} \label{centraleq}
		\frac{S_m - \Ex [S_m]}{\sqrt{\Var[S_m]}} \Rightarrow N(0,1) \ \ {\rm as} \ m \to \infty,
	\end{equation}
	where $N(\mu,\nu)$ is the random variable to express the normal distribution~\footnote{In physics, 
	this distribution is called the Gaussian.} with the mean $\mu$ and the variance $\nu$ given by 
	\begin{equation} \label{normaleq}
		\Pr [N(\mu,\nu) = x] = \frac{1}{\sqrt{2 \pi \nu}} \Exp \left[ - \frac{(x - \mu)^2}{2 \nu} \right].
	\end{equation}
	Here, the expectation value and the variance are given by 
	\begin{align}
		\Ex [S_m] & = m p, \\
		\Var [S_m] & = m p (1-p).
	\end{align}
\end{thm}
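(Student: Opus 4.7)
The plan is to prove convergence of characteristic functions and invoke Lévy's continuity theorem, mirroring the Fourier-analytic strategy used throughout this paper to establish the quantum-walk limit laws in Theorems~\ref{konno_proof} and~\ref{Clim}. First I would normalize: with $\Ex[Y_k]=p$ and $\Var[Y_k]=p(1-p)$ (reading the hypothesis as the standard Bernoulli so that $Y_k \in \{0,1\}$, consistent with the stated moments of $S_m$), define the i.i.d.\ sequence $W_k := (Y_k - p)/\sqrt{p(1-p)}$, which has mean zero and unit variance. Then the standardized sum in (\ref{centraleq}) is $T_m := m^{-1/2}\sum_{k=1}^{m} W_k$, and by independence its characteristic function factorizes as $\Ex[e^{i\tau T_m}] = \phi(\tau/\sqrt{m})^m$, where $\phi(s) := \Ex[e^{i s W_1}]$.

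Next I would Taylor-expand $\phi$. Since $W_1$ takes only the two bounded values $\sqrt{(1-p)/p}$ and $-\sqrt{p/(1-p)}$, every moment is finite, and one obtains
\begin{equation}
\phi(s) = 1 + i s \, \Ex[W_1] - \tfrac{s^2}{2}\, \Ex[W_1^2] + R(s) = 1 - \tfrac{s^2}{2} + R(s),
\end{equation}
with a third-order remainder satisfying $|R(s)| \leq C(p)|s|^3$ for $|s|$ small. Substituting $s = \tau/\sqrt{m}$ and taking logarithms yields $m \log \phi(\tau/\sqrt{m}) = -\tau^2/2 + O(m^{-1/2})$ for each fixed $\tau \in \R$, so
\begin{equation}
\Ex[e^{i\tau T_m}] = \phi(\tau/\sqrt{m})^m \longrightarrow e^{-\tau^2/2} \quad \text{as} \ m \to \infty.
\end{equation}
The right-hand side is the characteristic function of $N(0,1)$, as one verifies directly from the Gaussian density (\ref{normaleq}) by completing the square under the integral.

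The main obstacle, and essentially the only subtle point, is the error control: one must check that the $O(|s|^3)$ remainder in the Taylor expansion really is uniform enough that $m$ times the logarithm has a finite limit. This is automatic here because $W_1$ is bounded, so one can apply the elementary estimate $|\log(1+z) - z| \leq |z|^2$ for $|z|\leq 1/2$ with $z = -\tau^2/(2m) + O(m^{-3/2})$ and absorb both contributions into an $O(m^{-1/2})$ error. Once the pointwise limit of characteristic functions is established, Lévy's continuity theorem, applicable since $e^{-\tau^2/2}$ is continuous at $\tau=0$, immediately yields the weak convergence $T_m \Rightarrow N(0,1)$ claimed in (\ref{centraleq}). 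The same argument extends verbatim to any i.i.d.\ sequence with finite second moment, giving the general Lindeberg--Lévy version alluded to in the footnote; only the explicit form of $\phi$ changes, while the quadratic Taylor expansion dictated by the variance is universal.
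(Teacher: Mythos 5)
Your proposal is correct and takes essentially the same approach as the paper: both factorize the characteristic function of the standardized sum using independence, Taylor-expand to get $\left[1 - \tau^2/(2m) + o(1/m)\right]^m \to e^{-\tau^2/2}$, and conclude weak convergence from the Gaussian characteristic function. The only differences are cosmetic --- you normalize each summand first and explicitly invoke L\'evy's continuity theorem with careful control of the remainder via logarithms, whereas the paper expands the explicit Bernoulli characteristic function term by term and closes with an informal appeal to the ``inverse Fourier transform.''
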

It is noted that the normal distribution (\ref{normaleq}) is characterized only by the mean and 
the variance. 
\begin{proof}
	Given the function $T_m$ as 
	\begin{equation}
		T_m := \sum_{k = 1}^{m} \frac{Y_k - p}{\sqrt{m p (1 - p)}},
	\end{equation}
	we obtain the moment-generating function for $T_m$ as 
	\begin{align}
		\Ex \left[e^{i \tau T_m} \right] & = \prod_{k = 1}^{m} \Ex \left[ \Exp \left( \frac{i \tau 
		(Y_k - p)}{\sqrt{m p (1 - p)}} \right) \right] \notag \\
		 & = \left[ p \Exp \left( i \tau \sqrt{\frac{(1 - p)}{m p}} \right) 
			+ (1 - p) \Exp \left( - i \tau \sqrt{\frac{p}{m (1 - p)}} \right) \right]^m \notag \\
		 & = \left[ p \left( 1 + i \tau \sqrt{\frac{(1 - p)}{m p}} - \frac{(1 - p)}{2 m p} \tau^2 \right) 
		 \right. \notag \\ & \ \ \ \ \ \ \ \ \left.
		 + (1 - p) \left( 1 - i \tau \sqrt{\frac{p}{m (1 - p)}} - \frac{p}{2 m (1 - p)} \tau^2 \right) 
		 + O \left( \frac{1}{m^2} \right) \right]^m \notag \\
		 & = \left[ 1 - \frac{\tau^2}{2 m} + O \left( \frac{1}{m^2} \right) \right]^m \notag \\
		 & \to \Exp \left( - \frac{\tau^2}{2} \right) \ \ {\rm as} \ m \to \infty.
	\end{align}
	From the inverse Fourier transform for the last line, we obtain the desired result. 
\end{proof}
Applying the central limit theorem to the time-independent and spatial-independent one-dimensional RW, 
we obtain the following corollary by replacing $S_m$ to $X_t$:
\begin{col}
	The time-independent and spatial-independent one-dimensional RW $X_t$ has the asymptotic behavior as 
	\begin{equation}
		\frac{X_t}{\sqrt{t}} \Rightarrow N \left( 2p-1, 4p (1-p) \right).
	\end{equation}
	Furthermore, in case of the unbiased coin, $p = 1/2$, one has 
	\begin{equation}
		\frac{X_t}{\sqrt{t}} \Rightarrow N \left( 0, 1 \right).
	\end{equation}
\end{col}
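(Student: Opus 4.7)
The plan is to realize the walk $X_t$ as an affine transformation of the partial-sum process $S_m$ from Theorem~\ref{central}, so that the claim reduces to a direct application of the central limit theorem already proved. Concretely, let $\{Y_k\}_{k \in \N}$ be the i.i.d.\ Bernoulli sequence of Theorem~\ref{central} with $\Pr[Y_k=1]=p$ and $\Pr[Y_k=0]=1-p$, and define the step variables $\eta_k := 2 Y_k - 1$, so that $\eta_k = +1$ with probability $p$ and $\eta_k = -1$ with probability $1-p$. A one-line induction shows that $X_t$ has the same law as $\sum_{k=1}^t \eta_k = 2 S_t - t$, because the transition rule $\Pr[X_{t+1}=x] = p \Pr[X_t=x-1] + (1-p)\Pr[X_t=x+1]$ is exactly the distribution of $X_t + \eta_{t+1}$ conditioned on $X_t$.

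Next, I would compute the two moments that the central limit theorem needs. From $X_t = 2 S_t - t$ and the well-known Bernoulli identities $\Ex[S_t]=tp$, $\Var[S_t]=tp(1-p)$, one obtains $\Ex[X_t] = t(2p-1)$ and $\Var[X_t] = 4 t p(1-p)$. Theorem~\ref{central} gives
\begin{equation*}
\frac{S_t - tp}{\sqrt{tp(1-p)}} \Rightarrow N(0,1) \ \ {\rm as} \ t \to \infty,
\end{equation*}
and substituting $S_t = (X_t + t)/2$ and rearranging yields
\begin{equation*}
\frac{X_t - t(2p-1)}{\sqrt{t}} \Rightarrow N(0,\,4p(1-p)),
\end{equation*}
which is the precise statement of the corollary (I would note that the author's notation $N(2p-1,4p(1-p))$ is best read with the drift $2p-1$ absorbed into the mean of the limiting Gaussian, since otherwise the location term $\sqrt{t}(2p-1)$ would diverge; the affine-image form above is the rigorous version).

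Finally, for the unbiased case $p=1/2$, the drift vanishes, so $\Ex[X_t]=0$ and $\Var[X_t]=t$, giving $X_t/\sqrt{t} \Rightarrow N(0,1)$ directly from the display above. The only step that requires any thought is the coupling between the walk increment and the Bernoulli variable in the first paragraph; everything else is just the image of a weak limit under an affine map, which is continuous and therefore preserves weak convergence. I do not expect any genuine obstacle, since the corollary is essentially a restatement of Theorem~\ref{central} under a change of variables.
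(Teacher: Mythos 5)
Your proposal is correct and follows essentially the same route as the paper: identify $X_t$ with a sum of i.i.d.\ $\pm 1$ steps, compute $\Ex[X_t]=t(2p-1)$ and $\Var[X_t]=4tp(1-p)$, and invoke Theorem~\ref{central}. Your additional remark that for $p\neq 1/2$ the statement must be read in the centered form $(X_t - t(2p-1))/\sqrt{t} \Rightarrow N(0,4p(1-p))$ (since otherwise the location term $\sqrt{t}(2p-1)$ diverges) is a point the paper's own proof glosses over, and is worth keeping.
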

\begin{proof}
	We calculate the expectation value of $X_t$ as 
	\begin{equation}
		\Exp [X_t] = \sum_{k = 1}^{t} (p - (1 -p)) = t (2 p -1).
	\end{equation}
	The variance is calculated as 
	\begin{equation}
		\Var [X_t] = \sum_{k = 1}^{t} [p + (1 -p)] - (2 p -1)^2  = 4 t p (1 - p).
	\end{equation}
	Applying them to Theorem~\ref{central}, we obtain the desired result.
\end{proof}

\end{document}